\newtheorem{theorem}{Theorem}
\newtheorem{lemma}{Lemma}
\newtheorem{corollary}{Corollary}
\def\boxit#1{\vbox{\hrule\hbox{\vrule\kern3pt
  \vbox{\kern3pt#1\kern3pt}\kern3pt\vrule}\hrule}}
\def\Box{\boxit{\null}}
\newenvironment{proof}{\trivlist\item[]\emph{Proof}:}%
{\unskip\nobreak\hskip 1em plus 1fil\nobreak$\Box$
\parfillskip=0pt%
\endtrivlist}
\newcommand{\PRENDE}{\mathinner{\leftarrow}}
\newcommand{\DDD}{\mathscr{D}}
\newcommand{\SSS}{\mathscr{S}}
\newcommand{\TTT}{\mathscr{T}}
\newcommand{\pp}{\mathinner{\ldotp\ldotp}} 
\newcommand{\dslcp}{\mathit{DS}_{\!\mathit{lcp}}}
\newcommand{\lcp}{\mathit{lcp}}
\newcommand{\rmq}{\mathit{rmq}}
\newcommand{\lca}{\mathit{lca}}
\newcommand{\slink}{\mathit{sl}}
\newcommand{\bestfriend}{\mathit{best.friend}}
\newcommand{\pathstring}{\mathscr{P}}
\newcommand{\bestlcp}{\mathit{best.lcp}}
\newcommand{\ancestors}{\mathscr{A}}
\newcommand{\sibnum}{\mathit{sib}}
\newcommand{\lcplist}{\mathscr{E}}
\newcommand{\clones}{\mathscr{C}}
\begin{document}

\title{Managing Unbounded-Length Keys in Comparison-Driven Data Structures with Applications to On-Line Indexing\footnote{Parts of this paper appeared as extended abstracts in~\cite{AKLL05,FG04}.}}
\small
\author{\textit{Amihood Amir}\\[\smallskipamount]
  Department of Computer Science\\
  Bar-Ilan University, Israel\\
  and\\
  Department of Computer Science\\
  John Hopkins University, Baltimore MD\\
  \and
  \textit{Gianni Franceschini}\\[\smallskipamount]
  Dipartimento di Informatica\\
  Universit\`a di Pisa, Italy\\
  \and
  \textit{Roberto Grossi}\\[\smallskipamount]
  Dipartimento di Informatica\\
  Universit\`a di Pisa, Italy\\
  \and
  \textit{Tsvi Kopelowitz}\\[\smallskipamount]
  Department of Computer Science\\
  Bar-Ilan University, Israel\\
  \and
  \textit{Moshe Lewenstein}\\[\smallskipamount]
  Department of Computer Science\\
  Bar-Ilan University, Israel\\
  \and
  \textit{Noa Lewenstein}\\[\smallskipamount]
  Department of Computer Science\\
  Netanya College, Israel\\
} \date{}

\maketitle

\thispagestyle{plain}

\newpage

\begin{abstract}
  This paper presents a general technique for optimally transforming
  any dynamic data structure that operates on atomic and indivisible
  keys by constant-time comparisons, into a data structure that
  handles unbounded-length keys whose comparison cost is not a
  constant.  Examples of these keys are strings, multi-dimensional
  points, multiple-precision numbers, multi-key data (e.g.~records),
  XML paths, URL addresses, etc.  The technique is more general than
  what has been done in previous work as no particular exploitation of
  the underlying structure of is required.  The only requirement is
  that the insertion of a key must identify its predecessor or its
  successor.

  Using the proposed technique, online suffix tree construction can be
  done in worst case time $O(\log n)$ per input symbol (as opposed to
  amortized $O(\log n)$ time per symbol, achieved by previously known
  algorithms). To our knowledge, our algorithm is the first that
  achieves $O(\log n)$ worst case time per input symbol. Searching for
  a pattern of length $m$ in the resulting suffix tree takes $O(\min(m
  \log |\Sigma|, m + \log n) + tocc)$ time, where $tocc$ is the number
  of occurrences of the pattern. The paper also describes more
  applications and show how to obtain alternative methods for dealing
  with suffix sorting, dynamic lowest common ancestors and order
  maintenance.

  The technical features of the proposed technique for a given data
  structure~$\DDD$ are the following ones. The new data
  structure~$\DDD'$ is obtained from~$\DDD$ by augmenting the latter
  with an oracle for strings, extending the functionalities of the
  Dietz-Sleator list for order
  maintenance~\cite{DS87,TsakalidisActaInf84}.  The space complexity
  of~$\DDD'$ is $\SSS(n) + O(n)$ memory cells for storing~$n$ keys,
  where $\SSS(n)$ denotes the space complexity of~$\DDD$. Then, each
  operation involving $O(1)$ keys taken from~$\DDD'$ requires
  $O\bigl(\TTT(n)\bigr)$ time, where $\TTT(n)$ denotes the time
  complexity of the corresponding operation originally supported
  in~$\DDD$.  Each operation involving a key~$y$ \emph{not} stored in
  $\DDD'$ takes $O\bigl(\TTT(n) + |y|\bigr)$ time, where $|y|$ denotes
  the length of~$y$.  For the special case where the oracle handles
  suffixes of a string, the achieved insertion time is
  $O\bigl(\TTT(n)\bigr)$.
\end{abstract}

\newpage

\section{Introduction}
\label{sec:introduction}

Many applications manage keys that are arbitrarily long, such as
strings, multi-dimensional points, multiple-precision numbers,
multi-key data, URL~addresses, IP~addresses, XML~path strings and that are modeled either as $k$-dimensional keys for a given
positive integer $k > 1$, or as variable-length keys.  In response to
the increasing variety of these applications, the keys need to be
maintained in sophisticated data structures.  The comparison of any
two keys is more realistically modeled as taking time proportional to
their length, introducing an undesirable slowdown factor in the
complexity of the operations thus supported by the known data
structures.

More efficient \emph{ad hoc} data structures have been designed to
tackle this drawback.  A first version of lexicographic or ternary
search trees~\cite{BS97} dates back to~\cite{Cl64} and is an alternative
to tries.  Each node contains the $i$th symbol of a $k$-dimensional
key along with three branching pointers [left, middle, right] for the
three possible comparison outcomes $[<, =, >]$ against that element.
The dynamic balancing of ternary search trees was investigated with
lexicographic D-trees~\cite{Me79}, multi-dimensional
B-trees~\cite{GK80}, lexicographic globally biased trees~\cite{BST85},
lexicographic splay trees~\cite{ST85}, $k$-dimensional balanced binary
search trees~\cite{Go92}, and balanced binary search trees or $k$BB-trees~\cite{Va96}. Most of these data structures make use of
sophisticated and involved techniques to support search, insert, and
delete of a key of length~$k$ in a given set of $n$ keys, in $O(k +
\log n)$ time~\cite{BST85,Go92}. Some others support also split and
concatenate operations in $O(k + \log n)$
time~\cite{GK80,Me79,ST85,Va96}.  Moreover, other data structures
allow for weighted keys (e.g.~access frequencies) and the $\log n$
term in their time complexity is replaced by the logarithm of the
ratio between the total weights and the weight of the key at
hand~\cite{BST85,Me79,ST85,Va96}.

This multitude of {\em ad hoc\/} data structures stems from the lack
of a general data structural transformation from indivisible
(i.e.~constant-time comparable) keys to strings. Many useful search
data structures, such as AVL-trees, red-black trees~\cite{Ta83},
$(a,b)$-trees~\cite{HM82}, weight-balanced
BB[$\alpha$]-trees~\cite{NR73}, self-adjusting trees~\cite{ST85}, and
random search trees~\cite{AS96}, to name a few, are currently
available. They exhibit interesting combinatorial properties that make
them attractive both from the theoretical and from the practical point
of view.  They are defined on a set of indivisible keys supporting a
total order relation~$<$.  Searching and updating is driven by
constant-time comparisons against the keys stored in them.  Data
structuring designers successfully employ these data structures in
many applications (e.g.~the C++ Standard Template Library~\cite{STL}
or the LEDA package~\cite{LEDA}).  When dealing with keys of
length~$k$, it is natural to see if they can reuse their well-suited
data organizations without incurring in the slowdown factor of $O(k)$
in the time cost for these solutions.

A first step for exploiting the body of knowledge mentioned above,
thus obtaining new data structures for managing strings, has been
presented theoretically in~\cite{GI} and validated experimentally
in~\cite{CGI}.  This general technique exploits the underlying
structure of the data structures by considering the nodes along the
access paths to keys, each node augmented with a pair of integers.  In
order to apply it to one's favorite data structures, the designer must
know the combinatorial properties and the invariants that are used to
search and update those data structures, so as to deal with all
possible access paths to the same node.  This depends on how the
underlying (graph) structure is maintained through the creation and
destruction of nodes and the updates of some internal pointers.  (For
example, we may think of the elementary operations that are performed
by the classical insertion or deletion algorithm for binary search
trees, in terms of the access paths from the root towards internal
nodes or leaves.)  While a general scheme is described for searching
under this requirement, updating is discussed on an individual basis
for the above reason.  A random access path, for example, cannot be
managed unless the possible access paths are limited in number.  Also,
adding an internal link may create many access paths to a given node.
Related techniques, although not as general as that in~\cite{GI}, have
been explored in~\cite{Irving03,Roura01} for specific data structures
being extended to manage strings.

In this paper, we proceed differently. We completely drop any
topological dependence on the underlying data structures and still
obtain the asymptotic bounds of previous results. The goal is to show
that a more general transformation is indeed possible.  We present a
general technique that is capable of reusing many kinds of
(heterogeneous) data structures so that they can operate on strings
and unbounded-length keys.  It is essentially a black-box technique
that just requires that each such data structure, say~$\DDD$, is
driven by constant-time comparisons among the keys (i.e.~no hashing or
bit manipulation of the keys) and that the insertion of a key
into~$\DDD$ identifies the predecessor or the successor of that key
in~$\DDD$. We are then able to transform~$\DDD$ into a new data
structure, $\DDD'$, storing~$n$ strings as keys while preserving all
the nice features of~$\DDD$.

Asymptotically speaking, this transformation is costless. First, the
space complexity of $\DDD'$ is $\SSS(n) + O(n)$, where $\SSS(n)$
denotes the space complexity of~$\DDD$ and the additional $O(n)$ is for the (pointers to the) strings. (We note that the input strings actually occupy memory, but we prefer to view them as external to the
data structure as they are not amenable to changes. Hence, we just store the pointers to
strings, not the strings themselves). Second, each operation involving
$O(1)$ strings taken from~$\DDD'$ requires $O\bigl(\TTT(n)\bigr)$
time, where $\TTT(n)$ denotes the time complexity of the corresponding
operation originally supported in~$\DDD$. Third, each operation
involving a string~$y$ \emph{not} stored in~$\DDD'$ takes
$O\bigl(\TTT(n) + |y|\bigr)$ time, where~$|y|$ denotes the length
of~$y$.

We also consider the special case where the strings are suffixes of a given string. In this
special case if we insert the strings in reverse ordering then all the previously claimed
results hold, and, on top of that, we can implement the insertion operation of a suffix in $O\bigl(\TTT(n)\bigr)$ time.

Our technique exploits the many properties of one-dimensional
searching, and combines techniques from data structures and string
algorithms in a variety of ways.  Formally, we manage input strings
$x_1$, $x_2$,\dots, $x_n$ of total length $M = \sum_{i=1}^n |x_i|$.
Each string $x_i$ is a sequence of $|x_i|$ symbols drawn from a
potentially unbounded~\footnote{Our technique applies to the classical
  comparison-based RAM model adopted in many algorithms for sorting
  and searching. Assuming that $\Sigma$ is unbounded is not a limitation.
  When $\Sigma$ is small, adjacent symbols can be packed into the same
  word of memory. The lexicographic order is preserved by considering
  the words as individual symbols in a larger alphabet, with shorter
  strings consequently.  We do not treat here the case of RAM with
  word size bounded by~$w$ bits.  }  alphabet $\Sigma$, and the last
symbol of $x_i$ is a special endmarker less than any symbol in
$\Sigma$. In order to compare two strings~$x$ and~$y$, it is useful to
employ the length of their longest common prefix, defined as
$\lcp(x,y) = \max \{ \ell \geq 0 \mid x[1 \pp \ell] = y[1 \pp \ell]
\}$ (here, $\ell = 0$ denotes empty prefixes).  Given that length, we
can compare~$x$ and~$y$ in constant time by simply comparing their
first mismatching symbol, which is at position $1+\lcp(x,y)$ in both~$x$
and~$y$.

Keeping this fact in mind, we can use the underlying data
structure~$\DDD$ as a black box. We use simple properties of strings
and introduce a powerful oracle for string comparisons that extends
the functionalities of the Dietz-Sleator
list~\cite{DS87,TsakalidisActaInf84}, which is able to maintain order
information in a dynamic list. We call the
resulting structure a $\dslcp$~list. This data structure stores the sorted input
strings in $O(n)$ memory cells of space and allows us to find the
length of the longest common prefix of any two strings stored in the
$\dslcp$~list, in constant time. We can maintain a dynamic
$\dslcp$~list in constant time per operation (see
Section~\ref{sec:implementation-dslcp} for the operations thus supported) by
using a simple but key idea in a restricted dynamic version of the
range minimum query algorithm~\cite{BenderLCA}. We cannot achieve
constant time per operation in the fully dynamic version of this
problem because we would get the contradiction of comparison-based
sorting in $o(n \log n)$ time by using range minima data structures as
priority queues.

Note that for general strings in the case when $\TTT(n) = \Omega(\log n)$ one can use the following alternative technique. Use a compacted trie with special handling of access to
children (with weight balanced trees), see e.g.~\cite{ColeL03}, in $O(|y| + \log n)$ time in the worst case.  This absorbs the $O\bigl(\TTT(n) + |y|\bigr)$ cost of insertion.
Then on the nodes one can use the dynamic lowest common ancestor queries~\cite{SODA99*235} as the powerful oracle needed. Then it is sufficient to follow Section~\ref{sub:exploiting-lcp-values}, the section in which
one exploits $\lcp$ values in comparison-driven data structures.

Field of interests for our technique include the following scenarios.
\begin{enumerate}
\item
Operations with
sub-logarithmic costs, when $\TTT(n) = o(\log n)$.\footnote{Note that deletions have no such restrictions as insertions.} This happens:
\begin{itemize}
\item
in the worst
case (e.g.~$\DDD$ is a finger search tree~\cite{GHLST87}).
\item
in an amortized sense (e.g.~$\DDD$ is a self-adjusting tree~\cite{ST85}).
\item
with high
probability (e.g.~$\DDD$ is a treap~\cite{AS96}), when considering
access frequencies in the analysis.
\end{itemize}
\item
One desires to use a simpler data structure than the intensive dynamic $\lca$ data structure of~\cite{SODA99*235}.
\item
For the technique for suffixes. Here the running time of the insertion operation is $O\bigl(\TTT(n)\bigr)$ instead of $O\bigl(\TTT(n) + |y|\bigr)$ and, hence, the
above-described method does not hold. Section~\ref{sec:online-construction-suffix-tree} is dedicated to such an application.
\end{enumerate}

We also remark that we do not claim that our technique is as
amenable to implementation in a practical setting. (We suggest to use
the techniques devised in~\cite{GI} and experimented in~\cite{CGI} for
this purpose.)  Nevertheless, we believe that our general technique
may be helpful in the theoretical setting for providing an immediate
benchmark to the data structuring designer.  When inventing a new data
structure for strings, the designer can easily realize whether it
compares favorably to the known data structures, whose functionalities
can be smoothly extended to strings without giving up their structural
and topological properties.

Using our general technique, we obtain previous theoretical bounds in
an even simpler way. We also obtain new results on searching and
sorting strings. For example, we can perform suffix sorting, a crucial
step in text indexing~\cite{MM93} and in block sorting
compression based on the Burrows-Wheeler transform, in
$O\bigl( n + \sum_{i=1}^n \TTT(i) \bigr)$ time, also for unbounded
alphabet~$\Sigma$.
This result is a simple consequence of our result,
when applied to the techniques for one-dimensional keys given, for
example, in~\cite{Me84}. Another example of use is that of storing
implicitly the root-to-nodes paths in a tree as strings, so that we
can support dynamic lowest common ancestor ($\lca$) queries in
constant time, where the update operations involve adding/removing
leaves. In previous work, this result has been obtained with a special
data structure based upon a more sophisticated solution treating also
insertions that split arcs~\cite{SODA99*235}. We obtain a simple
method for a restricted version of the problem.

We present another major contribution of the paper, using our framework for \emph{online indexing}. \emph{Indexing} is one of the most important paradigms in
searching. The idea is to preprocess the text and construct a
mechanism that will later provide answer to queries of the the form
``does a pattern $P$ occur in the text'' in time proportional to the
size of the \emph{pattern} rather than the text. The suffix tree
[7,13,16,17] and suffix array [11,12] have proven to be invaluable
data structures for indexing.

One of the intriguing questions of the algorithms
community is whether there exists a real-time indexing algorithm. An
algorithm is \emph{online} if it accomplishes its task for the $i$th
input without needing the $i + 1$st input. It is \emph{real-time} if,
in addition, the time it operates between inputs is a constant. While
not all suffix trees algorithms are online (e.g. McCreight~\cite{McC76}, Farach~\cite{FOCS::Farach1997})
some certainly are (e.g. Weiner~\cite{Weiner:73}, Ukkonen~\cite{ALGOR::Ukkonen1995}). Nevertheless, the quest for
a real-time indexing algorithm is over 30 years old~\cite{Slisenko78}. It should be
remarked that Weiner basically constructs an online reverse prefix
tree. In other words, to use Weiner's algorithm for online indexing
queries, one would need to reverse the pattern.  For real-time
construction there is some intuition for constructing prefix, rather
than suffix trees, since the addition of a single symbol in a suffix
tree may cause $\Omega(n)$ changes, whereas this is never the case in
a prefix tree.

It should be remarked that for unbounded alphabets, no real-time
algorithm is possible since the suffix tree can be used for
sorting. All known comparison-based online suffix tree construction
algorithms for suffix tree or suffix array construction run in amortized $O(\log n)$ time per symbol and answer search queries in $O(m \log n + tocc)$, e.g. suffix trees, or $O(m + \log n + tocc)$, e.g. suffix arrays. The latter
uses non-trivial pre-processing for LCP (longest common prefix)
queries. However, the best that can be hoped for (but not hitherto
achieved) is an algorithm that pays $\Theta(\log n)$ time for
every \emph{single} input symbol.

The problem of \emph{dynamic indexing}, where changes can be made
anywhere in the text has been addressed as well \cite{SODA::GuFB1994,SICOMP::FerraginaG1998}. Real-time and online indexing
can be viewed as a special case of dynamic indexing, where the changes made are
insertions and deletions at the end (or, symmetrically, the
beginning) of the text. Sahinalp and Vishkin~\cite{SV96} provide a dynamic indexing where
updates are done in time $O(\log^3 n)$. This result was improved by
Alstrup, Brodal and Rauhe~\cite{ABR00} to an $O(\log^2 n \log \log n \log^*
n)$ update time and $O(m + \log n \log \log n + tocc)$ search time. The motivation for
real-time indexing is the case where the data arrives in a constant
stream and indexing queries are asked while the data stream is still
arriving. Clearly a real-time suffix tree construction answers this
need.

Our contribution is the first algorithm for online suffix tree
construction over unbounded alphabets. Our construction has {\bf worst case} $O(\log n)$ time processing
per input symbol, where $n$ is the length of the text
input so far. Furthermore, the search time for a pattern of length $m$
is $O(\min(m \log |\Sigma|, m + \log n))$, where $\Sigma$ is the
alphabet. This matches the best times of the amortized algorithms in the
comparison model. We do so by using a balanced search tree on the suffixes of our text using our proposed technique. This enables insertions
(and deletions) in time $O(\log n)$ and query time of  $O(m+ \log n + tocc)$.
Aside from the balanced search tree itself, the innovative part of the online indexing is the way we
maintain and insert incoming symbols into a suffix tree in time
$O(\log n)$ per symbol. We employ interesting observations that
enable a binary search on the path when a new node needs to be added
to the suffix tree. Note that deletions of characters from the beginning of the text can
also be handled within the same bounds. We note that in the meantime there has been some progress on the problem. Amir and Nor~\cite{AN08} showed a method
achieving real-time pattern matching, i.e. $O(1)$ per character addition, in the case where the alphabet size is of constant size. However, the result (1) does not seem to scale up to a non-constant sized alphabet and (2) does not find the matches of a pattern (rather it announces whether there exists a match of the pattern somewhere in the text when queried). Very recently, new algorithms have been proposed in \cite{BreslauerI11,TsviFOCS2012}.

The paper is organized as follows. In
Section~\ref{sec:gener-techn-strings}, we describe our general
technique, assuming that the $\dslcp$~list is given. We detail
the implementation of the $\dslcp$~list in
Section~\ref{sec:implementation-dslcp}. We discuss an interesting
application in Section~\ref{sec:some-applications}. In Section~\ref{sec:technique-suffixes} we present the technique for suffixes and in Section~\ref{sec:suffix-application} we give a couple of applications.
Finally, we discuss our solutions for online suffix tree construction in Section~\ref{sec:online-construction-suffix-tree}.

\section{The General Technique for Strings}
\label{sec:gener-techn-strings}

We begin with the description of our technique, which relies on an
oracle for strings called the $\dslcp$~list.  As previously mentioned, each string $x$ is a
sequence of symbols drawn from a potentially unbounded alphabet
$\Sigma$, and the last symbol in $x$ is a special endmarker smaller
than any symbol in $\Sigma$.  In order to compare any two strings~$x$
and~$y$, we exploit the length $\ell=\lcp(x,y) = \max \{ \ell \geq 0
\mid x[1 \pp \ell] = y[1 \pp \ell] \}$ of their longest common
prefix. Since we use endmarkers, if $\ell = |x|$, then $\ell = |y|$
and so $x=y$. Otherwise, it is $x < y$ in lexicographic order if and
only if $x[\ell+1] < y[\ell+1]$. Hence, given $\lcp(x,y)$, we can
check $x \leq y$ in constant time.  This is why we center our
discussion around the efficient computation of the $\lcp$~values.

Formally, the $\dslcp$~list stores a \emph{sorted} sequence of strings
$x_1, x_2, \ldots, x_n$ in non-decreasing lexicographic order, where
each string is of unbounded length and is referenced by a pointer
stored in a memory cell (e.g.~\texttt{char} \texttt{*p} in C~language). A
$\dslcp$~list~$L$ supports the following operations:
\begin{itemize}
\item Query $\dslcp(x_p, x_q)$ in~$L$. It returns the value of
  $\lcp(x_p, x_q)$, for any pair of strings $x_p$ and $x_q$ stored
  in~$L$.
\item Insert~$y$ into~$L$. It assigns to~$y$ the position between two
  consecutive keys~$x_{k-1}$ and~$x_k$.  Requirements: $x_{k-1} \leq y
  \leq x_k$ holds, and $\lcp(x_{k-1}, y)$ and $\lcp(y, x_k)$ are given
  along with~$y$ and $x_{k-1}$ (or, alternatively,~$y$ and $x_k$).
\item Remove string~$x_i$ from its position in~$L$.
\end{itemize}

We show in Section~\ref{sec:implementation-dslcp} how to implement
the $\dslcp$~list with the bounds stated below.

\begin{theorem}
  \label{the:dslcp-list}
  A $\dslcp$~list~$L$ can be implemented using $O(n)$ memory cells of
  space, so that querying for~$\lcp$~values, inserting keys into~$L$
  and deleting keys from~$L$ can be supported in $O(1)$ time per
  operation, in the worst case.
\end{theorem}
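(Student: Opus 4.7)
The plan is to split the $\dslcp$ list into two coordinated layers. The outer layer is a Dietz--Sleator order-maintenance list storing the strings in sorted order, giving $O(1)$ worst-case insertions, deletions, and order comparisons between list elements. The inner layer stores, between each pair of consecutive strings, the value $E[i] = \lcp(x_i, x_{i+1})$, and answers a query $\dslcp(x_p, x_q)$ via the standard identity $\lcp(x_p, x_q) = \min_{p \le i < q} E[i]$; thus a $\dslcp$ query reduces to a range-minimum query (RMQ) on $E$. Since the order-maintenance layer translates any two pointers $x_p, x_q$ into their relative position in constant time, the only remaining task is to support RMQ on $E$ under the required updates.

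The crucial observation, and the feature that distinguishes the problem from fully dynamic RMQ (which is ruled out by the sorting lower bound mentioned in the introduction), is that every insertion arrives with the relevant $\lcp$ values already supplied: when $y$ is inserted between $x_{k-1}$ and $x_k$, the caller passes $\lcp(x_{k-1}, y)$ and $\lcp(y, x_k)$, and one checks that the old entry satisfies $E[k-1] = \lcp(x_{k-1}, x_k) = \min\bigl(\lcp(x_{k-1}, y),\, \lcp(y, x_k)\bigr)$. Hence every insertion splits one entry of $E$ into two entries whose values are given as input, and every deletion replaces two adjacent entries by their minimum; no new value is ever produced by a string comparison inside the structure, which is exactly what prevents the sorting reduction from applying.

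To realize RMQ in $O(1)$ worst-case time under these restricted updates, I would use a hierarchical block decomposition of $E$. Partition $E$ into blocks of size $\Theta(\log n)$, maintained with standard size-doubling/halving rebalancing so that splits and merges of adjacent blocks occur only $O(1)$ times per user operation. Inside each block, a Cartesian-tree signature of $O(\log n)$ bits, together with a universal precomputed table indexed by such signatures, answers intra-block RMQ in $O(1)$. At the top level, where the array of block minima must itself support RMQ under localized updates, one can either apply the construction recursively for $O(\log^{*} n)$ levels, or maintain a sparse table whose reconstruction is de-amortized by performing a constant amount of work during each user operation.

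The step I expect to be the main obstacle is this worst-case de-amortization at the top level: a naive sparse table yields $O(1)$ queries but requires $\Theta(n \log n)$ preprocessing, so rebuilding must be overlapped with the stream of user operations. The plan is to argue that, because each update touches only $O(1)$ positions of the block-minima array and the new values are supplied explicitly, a schedule of $O(1)$ rebuilding steps per operation always leaves a usable top-level table available for queries, so both queries and updates run in $O(1)$ worst case. The overall space stays in $O(n)$ since every layer is linear in the number of elements it stores, and the theorem follows.
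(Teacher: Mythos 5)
Your reduction to range minima via $\lcp(x_p,x_q)=\min_{p\le i<q}\lcp(x_i,x_{i+1})$, your identification of monotonicity (the old entry equals the minimum of the two new ones) as the property that evades the sorting lower bound, and your bottom level (blocks with Cartesian-tree signatures decoded through shared lookup tables) all match the paper's construction. The genuine gap is at the top level, which is precisely the part you flag as "the main obstacle" and then wave away. A sparse table cannot be de-amortized the way you claim: a single update to one position of the block-minima array invalidates $\Theta(N)$ table entries (every power-of-two interval containing that position), and an \emph{insertion} additionally shifts the indices of everything to its right, so "a constant amount of rebuilding work per operation" leaves the table arbitrarily stale relative to the update stream. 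Your alternative of recursing does not close the gap either: blocking by $\Theta(\log n)$ at each level needs $\Theta(\log n/\log\log n)$ levels to bottom out, and even a $\log^{*}n$-depth recursion yields $O(\log^{*}n)$, not $O(1)$, query time unless the per-level costs telescope, which you do not argue.

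The paper resolves this differently, and the resolution uses monotonicity a second time. The top level is a weight-balanced tree (branching factor $O(1)$, height $O(\log m)$) in which every internal node stores the prefix minima and suffix minima of the entries below it, plus path-string labels that give constant-time LCA between leaves by table lookup; a query then combines $O(1)$ stored minima. Monotonicity guarantees that an insertion changes only $O(1)$ prefix/suffix minima at each ancestor, so an update costs $O(\log m)$ rather than $\Theta(m)$; node splits are paid for by the weight-balance property and made worst-case by lazy construction. This $O(\log m)$ is then reduced to $O(1)$ worst case not by rebuilding a table but by indirection: the tree has only $O(n/\log^{2}n)$ leaves, each a bucket of $O(\log^{2}n)$ entries (kept that size by the Levcopoulos--Overmars largest-bucket-splitting rule), so the main tree receives an update only once every $\log n$ user operations and its $O(\log n)$ work is spread incrementally over them, with at most one pending update at a time. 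You need some structure of this kind --- one where an update provably touches $O(1)$ summary values per level and where updates reach the expensive level only rarely --- to make the worst-case $O(1)$ bound go through.
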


We devote the rest of this section on how to apply
Theorem~\ref{the:dslcp-list} to a comparison-driven data structure.  The $\dslcp$~list~$L$ is a valid tool
for dynamically computing $\lcp$~values for the strings stored
in~$L$. It is a natural question
to see if we can also exploit~$L$ to compare against an arbitrary
string $y \not \in L$, which we call computation \emph{on the fly} of
the $\lcp$~values since they are not stored in~$L$, nor can be
inferred by accessing $L$. Note that this operation does not
immediately follow from the aforementioned operations, and we
describe in Section~\ref{sub:computing-lcp-values} how to perform them.

After that, we introduce our general technique for data structures in
Section~\ref{sub:exploiting-lcp-values}, where we prove the
main result of this section (Theorem~\ref{the:string-data-structure}), which allows us to reuse a large corpus of
existing data structures as black boxes.

\begin{theorem}
  \label{the:string-data-structure}
  Let~$\DDD$ be a comparison-driven data structure such that the
  insertion of a key into~$\DDD$ identifies the predecessor or the
  successor of that key in~$\DDD$. Then,~$\DDD$ can be transformed
  into a data structure $\DDD'$ for
  strings such that
  \begin{itemize}
  \item the space complexity of $\DDD'$ is $\SSS(n) + O(n)$ for
    storing~$n$ strings as keys (just store the references to strings,
    not the strings themselves), where $\SSS(n)$ denotes the native space
    complexity of~$\DDD$ in the number of memory cells occupied;
  \item each operation involving $O(1)$ strings in $\DDD'$ takes
    $O\bigl(\TTT(n)\bigr)$ time, where $\TTT(n)$ denotes the time
    complexity of the corresponding operation originally supported
    in~$\DDD$;
  \item each operation involving a string~$y$ \emph{not stored} in
    $\DDD'$ takes $O\bigl(\TTT(n) + |y|\bigr)$ time, where $|y|$
    denotes the length of~$y$.
  \end{itemize}
\end{theorem}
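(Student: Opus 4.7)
The plan is to realize $\DDD'$ as a pair $(\DDD, L)$, where $\DDD$ is the original comparison-driven data structure holding \emph{pointers} to the input strings, and $L$ is a $\dslcp$~list (Theorem~\ref{the:dslcp-list}) holding the same set of strings sorted lexicographically. The space bound $\SSS(n) + O(n)$ is immediate, since $L$ costs $O(n)$ cells and $\DDD$ stores only pointers. For operations that involve only strings already stored in~$\DDD'$, every time $\DDD$'s algorithm would have compared two stored keys~$x_p$ and~$x_q$, we simulate the comparison in $O(1)$ by querying $\dslcp(x_p,x_q)$ in~$L$ and then comparing the single symbols at position $\lcp(x_p,x_q)+1$ of $x_p$ and $x_q$. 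Hence the simulated operation runs in $O(\TTT(n))$ time, matching $\DDD$'s native cost.

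The delicate case is an operation whose argument is a string~$y$ that is not stored in~$\DDD'$, typically an insertion; other operations (search, delete-by-key) are handled by the same mechanism. Here I invoke the \emph{on-the-fly lcp computation} described in Section~\ref{sub:computing-lcp-values}. The idea is to maintain, throughout the simulation of the operation, a bracketing pair $x_\ell \le y \le x_r$ of stored keys together with known lower bounds $\lambda_\ell$ and $\lambda_r$ on $\lcp(y,x_\ell)$ and $\lcp(y,x_r)$, respectively. Each time $\DDD$'s algorithm asks to compare~$y$ with some stored key~$x_i$, we observe that $x_i$ is lexicographically between $x_\ell$ and $x_r$; using the $O(1)$ queries $\dslcp(x_\ell,x_i)$ and $\dslcp(x_i,x_r)$ in~$L$ we can determine, without reading any new character of~$y$, whether $\lcp(y,x_i)$ is already pinned down by $\lambda_\ell,\lambda_r$ and the relative $\lcp$'s among stored keys. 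Only when it is not pinned down do we scan new characters of~$y$, and every such character either extends $\lambda_\ell$ or extends $\lambda_r$ monotonically, so the total number of character reads over the whole operation is $O(|y|)$.

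With this primitive, the simulation of $\DDD$'s algorithm on~$y$ performs at most $O(\TTT(n))$ comparisons, each costing $O(1)$ amortized against the $O(|y|)$ budget for character scans, for a total of $O(\TTT(n)+|y|)$. By hypothesis, at the end of an insertion $\DDD$ returns the predecessor~$x_{k-1}$ or successor~$x_k$ of~$y$; the bracket $(\lambda_\ell,\lambda_r)$ actually equals $\bigl(\lcp(x_{k-1},y),\lcp(y,x_k)\bigr)$ at that point, because the scan of~$y$ terminated exactly at the first mismatch with each neighbor, so we have precisely the information required by the \textsc{insert} interface of~$L$, which we then invoke in $O(1)$. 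Deletion of a stored string removes it from both~$\DDD$ and~$L$, each in their native time.

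The main obstacle is the careful bookkeeping underlying the on-the-fly $\lcp$ computation: to guarantee that each new character of~$y$ is read at most a constant number of times across the whole sequence of comparisons that $\DDD$'s algorithm issues, one must argue that the relation between $\lcp(y,x_i)$ and the triple $\lcp(x_\ell,x_i), \lcp(x_i,x_r), \min(\lambda_\ell,\lambda_r)$ is determined by a constant-size case analysis, so that the only situation in which fresh symbols of~$y$ need to be examined is when the current scan frontier of~$y$ coincides with the boundary being tested; those cases are exactly the ones handled in Section~\ref{sub:computing-lcp-values}, which I assume and plug in here. Given that primitive, the rest of the proof is an essentially mechanical reduction from a comparison-driven algorithm to its string-aware simulation on $(\DDD,L)$.
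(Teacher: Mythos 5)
Your proposal follows essentially the same route as the paper: augment $\DDD$ with the $\dslcp$~list of Theorem~\ref{the:dslcp-list}, simulate each comparison between stored keys by an $O(1)$ $\lcp$ query, and charge character reads for an unstored string~$y$ to the on-the-fly mechanism of Section~\ref{sub:computing-lcp-values} (your two-sided bracket $(x_\ell,x_r)$ is a cosmetic variant of the paper's single $(\bestfriend,\bestlcp)$ pair, with the same telescoping analysis as Lemma~\ref{lemma:lcp_on_the_fly}). The only imprecision is your claim that the final bracket automatically equals $\bigl(\lcp(x_{k-1},y),\lcp(y,x_k)\bigr)$: the hypothesis guarantees that insertion identifies only \emph{one} of the predecessor/successor, and $\DDD$ need never have compared~$y$ against the other neighbor, so — as the paper does — one should explicitly compute the missing $\lcp$ by an extra $O(|y|)$ scan (or one more on-the-fly call), which is absorbed by the $O\bigl(\TTT(n)+|y|\bigr)$ budget.
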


\subsection{\boldmath Computing the $\lcp$~values on the fly}
\label{sub:computing-lcp-values}

We now examine the situation in which we have to compare any given
string~$y$ against an arbitrary sequence of strings $x \in L$ (with
some of the latter ones possibly repeated inside the sequence).  If
$y$ has to be compared $g$~times against strings in $L$, it follows a
simple lower bound of $\Omega(g + |y|)$ in the worst case time
complexity, since $y$'s symbols have to be read and we have to produce
$g$~answers. We show how to produce a matching upper bound.  We figure
out ourselves in the worst case situation, namely, the choice of~$x
\in L$ is unpredictable from our perspective. Even in this case, we
show how to compute $\lcp(x, y)$ efficiently. We assume that the empty
string is implicitly kept in~$L$ as the smallest string.

We employ two global variables, $\bestfriend$ and $\bestlcp$, which
are initialized to the empty string and to~0, respectively. During the
computation, the variables satisfy the invariant that, among all the strings
in~$L$ compared so far against~$y$, the one pointed to by $\bestfriend$
gives the maximum $\lcp$~value, and that value is $\bestlcp$. We now
have to compare~$y$ against~$x$, following the simple
algorithm\footnote{Although the code in Figure~\ref{fig:code} can be
  improved by splitting the case $m \geq \bestlcp$ of line~2 into two
  subcases, it does not improve the asymptotic complexity.} shown in
Figure~\ref{fig:code}.  Specifically, using~$L$, we compute~$m =
\dslcp(\bestfriend, x)$, since both strings are in~$L$. If $m <
\bestlcp$, we can infer that $\lcp(x,y) = m$ and return that value.
Otherwise, we may possibly extend the number of matched characters
in~$y$ storing it into $\bestlcp$, thus finding a new $\bestfriend$.
It is a straightforward task to prove the correctness of the invariant
(note that it works also in the border case when $x = \bestfriend$).
It is worth noting that the algorithm reported in
Figure~\ref{fig:code} is a computation of a simplification of a combinatorial property
that has been indirectly rediscovered many times for several string data
structures.

\begin{figure}[t]
\noindent
Compute $\lcp(x, y)$ on the fly:
\begin{algorithmic}[1]
  \STATE $m \PRENDE \dslcp(\bestfriend, x)$
  \IF{$m \geq \bestlcp$}
     \STATE $m \PRENDE \bestlcp$
     \STATE \textsc{while} $x[m+1] = y[m+1]$ \textsc{do} $m \PRENDE m+1$
     \STATE $\bestfriend \PRENDE x$
     \STATE $\bestlcp \PRENDE m$
  \ENDIF
  \STATE \textsc{return} $m$
\end{algorithmic}
\caption{Code for computing $\lcp(x,y)$~values on the fly, where $x
  \in L$ and $y \not \in L$.}
\label{fig:code}
\end{figure}

We now analyze the cost of a sequence of $g$ calls to the code in
Figure~\ref{fig:code}, where $y$ is always the same string while $x$
may change at any different call. Let us assume that the instances
of~$x$ in the calls are $x'_1, x'_2, \ldots, x'_g \in L$, where the
latter strings are not necessarily distinct and/or sorted. For the
given string $y \not \in L$, the total cost of computing $\lcp(x'_1,
y)$, $\lcp(x'_2, y)$,~\dots, $\lcp(x'_g, y)$ on the fly with the code
shown in Figure~\ref{fig:code} can be accounted as follows.  The cost
of invoking the function is constant unless we enter the body of the
while loop at line~4, to match further characters while increasing the
value of~$m$.  We can therefore restrict our analysis to the strings
$x'_i \in L$ that cause the execution of the body of that while loop.
Let us take the $k$th such string, and let $m_k$ be the value of~$m$ at
line~6.  Note that the body of the while loop at line~4 is executed
$m_k - m_{k-1}$ times (precisely, this is true since $\bestlcp =
m_{k-1}$, where $m_0 = 0$).  Thus the cost of computing the
$\lcp$~value for such a string is $O(1 + m_k - m_{k-1})$.

We can sum up all the costs. The strings not entering the while loop
contribute each for a constant number of steps; the others contribute
more, namely, the $k$th of them requires $O(1 + m_k - m_{k-1})$ steps.
As a result, we obtain a total cost of $O(g + \sum_k (m_k - m_{k-1}))
= O(g + |y|)$ time, since $m_k \geq m_{k-1}$ and $\sum_k (m_k -
m_{k-1})$ is upper bounded by the length of the longest matched prefix
of~$y$, which is in turn at most~$|y|$.

\begin{lemma}
  \label{lemma:lcp_on_the_fly}
  The computation on the fly of any sequence of~$g$ $\lcp$~values
  involving a given string $y \not \in L$ and some strings in~$L$ can
  be done in $\Theta(g + |y|)$ time in the worst case.
\end{lemma}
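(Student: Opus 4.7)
The plan is to establish the matching lower and upper bounds separately. The lower bound $\Omega(g + |y|)$ is immediate: any correct algorithm must produce $g$ output values, and if $y$ ever appears in the maximum $\lcp$ value returned, its symbols must have been inspected, so reading $y$ contributes $\Omega(|y|)$ in the worst case (which is certainly the regime we are analyzing, e.g.\ when some $x'_k$ equals $y$ up to its last symbol).

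For the upper bound, I would first verify that the code in Figure~\ref{fig:code} correctly maintains the invariant that, after the $k$-th call, $\bestfriend$ stores a string in~$L$ maximizing $\lcp(\cdot,y)$ over $\{x'_1,\ldots,x'_k\}$ and $\bestlcp$ equals that maximum. The delicate branch is the case $m < \bestlcp$ at line~2, where we return $m$ without touching~$y$. Here I would invoke the standard ``ultrametric'' property of $\lcp$ on three strings: for any $a,b,c$ one has $\lcp(a,c)\geq \min(\lcp(a,b),\lcp(b,c))$, with equality whenever the two arguments of the $\min$ differ. Applied to $a=x$, $b=\bestfriend$, $c=y$ with $\lcp(x,\bestfriend)=m$ and $\lcp(\bestfriend,y)=\bestlcp>m$, this forces $\lcp(x,y)=m$, justifying the early return. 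In the complementary case $m \geq \bestlcp$, the while loop at line~4 extends the match starting from position $\bestlcp+1$, so correctness and the maintenance of the invariant are then clear.

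For the running time I would use a simple amortization. Every call spends $O(1)$ outside the while loop at line~4, contributing $O(g)$ overall. Let $m_k$ denote the value of $\bestlcp$ immediately after the $k$-th call (with $m_0=0$); note that the code only overwrites $\bestlcp$ when it strictly grows, so $m_0 \leq m_1 \leq \cdots \leq m_g$. When the $k$-th call enters the body of the while loop, it executes exactly $m_k - m_{k-1}$ symbol comparisons in that loop. Summing,
\[
\sum_{k=1}^{g} (m_k - m_{k-1}) \;=\; m_g - m_0 \;\leq\; |y|,
\]
since $\bestlcp$ never exceeds the length of the longest prefix of $y$ matched so far. Thus the total time is $O(g + |y|)$, matching the lower bound and giving $\Theta(g + |y|)$.

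The only real subtlety is the correctness of the early-return branch, so if there is an obstacle it is making that three-string $\lcp$ inequality completely precise; the telescoping time bound is then automatic.
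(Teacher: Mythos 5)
Your proof is correct and follows essentially the same route as the paper: the same $\Omega(g+|y|)$ lower bound, and the same telescoping sum $\sum_k (m_k - m_{k-1}) \le |y|$ over the calls that enter the while loop for the upper bound. The only difference is that you make explicit, via the three-string inequality $\lcp(a,c)\geq\min(\lcp(a,b),\lcp(b,c))$ (with equality of the two smallest values), the correctness of the early-return branch, which the paper dismisses as "a straightforward task."
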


Note that if it was the case that $y \in L$ we could easily obtain a bound
of~$O(g)$ in Lemma~\ref{lemma:lcp_on_the_fly}
using Theorem~\ref{the:dslcp-list}.  However, we assumed that $y \not \in L$. Nevertheless, Lemma~\ref{lemma:lcp_on_the_fly} allows
us to reduce the cost from $O(g \times |y|)$ to $O(g + |y|)$.
Finally, letting $\ell = \max_{1 \leq i \leq g} \lcp(y, x'_i) \leq
|y|$, we can refine the upper and lower bounds of
Lemma~\ref{lemma:lcp_on_the_fly}, obtaining an analysis of $\Theta(g +
\ell)$ time in the worst case.

\subsection{\boldmath Exploiting $\lcp$~values in comparison-driven
  data structures}
\label{sub:exploiting-lcp-values}

We can now finalize the description of our general technique, proving
Theorem~\ref{the:string-data-structure}.  The new data
structure~$\DDD'$ is made up of the original data structure~$\DDD$
along with the $\dslcp$~list~$L$ mentioned in
Theorem~\ref{the:dslcp-list}, and uses the on the fly computation
described in Section~\ref{sub:computing-lcp-values}. The additional
space is that of~$L$, namely, $O(n)$ words of memory.

For the cost of the operations, let us assume first that the generic
operation requiring $\TTT(n)$ time does not change the set of keys
stored in the original~$\DDD$.  When this operation requires to
compare two atomic keys~$x$ and $x'$ in~$\DDD$, we compare the
homologous strings~$x$ and $x'$ in~$\DDD'$ by using~$\dslcp(x, x')$ to
infer whether $x < x'$, $x = x'$, or $x > x'$ holds, in constant time
by Theorem~\ref{the:dslcp-list}. Since there are at most $\TTT(n)$
such comparisons, it takes $O\bigl(\TTT(n)\bigr)$ time.

On the other hand, an operation can also employ a string that
is not stored in~$\DDD'$. When it compares such a string, say~$y$,
with a string~$x$ already in~$\DDD'$, we proceed as in
Section~\ref{sub:computing-lcp-values} to infer the outcome of
comparisons, where $g \leq \TTT(n)$.  By
Lemma~\ref{lemma:lcp_on_the_fly}, the computation takes
$O\bigl(\TTT(n) + |y|\bigr)$ time.~\footnote{Actually, as observed at
  the end of Section~\ref{sub:computing-lcp-values}, it takes
  $O\bigl(\TTT(n) + \bestlcp_F -\bestlcp_I\bigr)$ time, where $\bestlcp_I$
  is the initial value of $\bestlcp$ and $\bestlcp_F$ is its final
  value. This observation is useful when accounting for the complexity
  of a sequence of correlated insertions (as in
  Theorem~\ref{the:suffix-data-structure}), since it yields a
  telescopic sum.}

It remains to discuss the case of operations insertion or deletion of a
string~$y$ into (or out of)~$\DDD'$. Let us consider the insertion of~$y$. While simulating the insertion of $y$ into $\DDD$, at most $\TTT(n)$ keys already in $\DDD'$
have to be compared to $y$, which takes $O(\TTT(n))$ overall time. In the end, we obtain the successor or predecessor of $y$ in $\DDD$, by the hypothesis of Theorem~\ref{the:string-data-structure}.
Using~$L$, we
know both and, therefore, we can compute their $\lcp$~values with~$y$
in $O(|y|)$ time by scanning their first $|y|$ symbols at most, thus, computing the lcp's with their predcessor and successor,
satisfying the requirement for inserting~$y$ into~$L$ in $O(1)$ time.
The final cost is upper bounded by Lemma~\ref{lemma:lcp_on_the_fly},
where $g \leq \TTT(n)$. The deletion of a string~$x$ is much simpler,
involving its removal from~$L$ in $O(1)$ time by
Theorem~\ref{the:dslcp-list}. In summary, the original cost of the
operations in~$\DDD$ preserve their asymptotical complexity in~$\DDD'$
except when new strings $y$ are considered. In that case there is an
\emph{additive} term of $|y|$ in the time complexity. However, this is optimal.


\section{\boldmath Implementation of the $\dslcp$~List}
\label{sec:implementation-dslcp}

We describe how to prove Theorem~\ref{the:dslcp-list}, implementing
the $\dslcp$~list~$L$ introduced in
Section~\ref{sec:gener-techn-strings}.  Recall that  the strings $x_1, x_2, \ldots, x_n$ in~$L$ are in lexicographic
order. We will use the following well known observation.

\begin{lemma}\label{minlcp}{\em \cite{MM93}}
For $p<q$ we have $\lcp(x_p, x_q) = \min \{ \lcp(x_k, x_{k+1}) \mid p \leq k < q \}$.
\end{lemma}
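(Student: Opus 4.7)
The plan is to establish the equality by two matching inequalities. Set $\mu = \min\{\lcp(x_k, x_{k+1}) \mid p \leq k < q\}$ and $\ell = \lcp(x_p, x_q)$; I want to show $\mu = \ell$.

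For $\ell \geq \mu$, I would appeal to the standard ultrametric-like inequality for longest common prefixes: for any three strings $a, b, c$,
\[
\lcp(a, c) \;\geq\; \min\bigl(\lcp(a, b), \lcp(b, c)\bigr),
\]
which is immediate from the definition (if $a$ and $b$ share the first $\alpha$ symbols and $b$ and $c$ share the first $\beta$ symbols, then $a$ and $c$ share at least the first $\min(\alpha,\beta)$). A straightforward induction on $q-p$, repeatedly applying this inequality to consecutive pairs, then gives $\lcp(x_p, x_q) \geq \min_{p \leq k < q} \lcp(x_k, x_{k+1}) = \mu$.

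For the reverse inequality $\mu \geq \ell$, I would use the sortedness of the list. Let $w = x_p[1\pp\ell] = x_q[1\pp\ell]$. I claim that every string $x_k$ with $p \leq k \leq q$ also has $w$ as a prefix. Indeed, since $x_p \leq x_k \leq x_q$, if $x_k$ disagreed with $w$ at some first position $j \leq \ell$, then either $x_k[j] < w[j] = x_p[j]$, forcing $x_k < x_p$, or $x_k[j] > w[j] = x_q[j]$, forcing $x_k > x_q$; either outcome contradicts the sandwich $x_p \leq x_k \leq x_q$. The endmarker convention (a unique symbol smaller than every symbol of $\Sigma$, appearing only at the very end) rules out the degenerate case $|x_k| < \ell$: if $x_k$ were a proper prefix of $w$, then $x_k$'s endmarker would occur at some position $\leq \ell$, making $x_k$ strictly less than $x_p$. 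Hence each $x_k$ begins with $w$, so every consecutive pair $(x_k, x_{k+1})$ agrees on its first $\ell$ symbols, giving $\lcp(x_k, x_{k+1}) \geq \ell$ for all $p \leq k < q$, and therefore $\mu \geq \ell$.

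Combining the two bounds gives $\mu = \ell$, as claimed. The only delicate point is the degenerate case in the sandwich argument where some $x_k$ could a priori be shorter than $\ell$; I expect this to be the main obstacle, but it is dispatched cleanly by invoking the endmarker convention already stipulated in Section~\ref{sec:gener-techn-strings}. The rest is bookkeeping around a two-line induction.
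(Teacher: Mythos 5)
The paper states this lemma as a known fact, citing Manber and Myers, and gives no proof of its own, so there is nothing to compare against; your argument is correct and complete. Both directions are sound --- the ultrametric inequality $\lcp(a,c)\geq\min(\lcp(a,b),\lcp(b,c))$ plus induction for one direction, and the sortedness sandwich for the other --- and your use of the endmarker convention to exclude the case where some intermediate $x_k$ is a proper prefix of $x_p[1\pp\ell]$ is exactly the right way to close the one genuine gap in the naive version of the argument.
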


In other words, storing only the $\lcp$~value
between each key~$x_k$ and its successor $x_{k+1}$ in~$L$, for $1 \leq
k < n$, we can answer arbitrary $\lcp$~queries using the so-called
range minimum queries~\cite{BenderLCA}.


We are interested in discussing the dynamic version of the problem. In
its general form, this is equivalent to sorting since it can implement
a priority queue. Fortunately, we can attain constant time per
operation in our case. We consider a special form of insertion and deletion and,
more importantly, we impose the additional constraint that the set of
entries can only vary \emph{monotonically}, which we define shortly.

The type of insertion and deletion that we discuss is as follows. Let $L$ be a list of integers. An {\em insertion} is the replacement of an element of entry $e$ in the list with two new (adjacent) elements $e'$ and $e''$. A {\em deletion} is the replacement of two (adjacent) elements $e'$ and $e''$ with an element $e$.

{\bf Monotonicity:} We say that an insertion or deletion is {\em montone} if  $e = \min\{e',e''\}$.

In our setting the monotonicity constraint is not
artificial, being dictated by the requirements listed in
Section~\ref{sec:gener-techn-strings} when inserting (or deleting) string~$y$
between $x_k$ and $x_{k+1}$. A moment of reflection shows that both
$e' = \lcp(x_k, y)$ and $e'' = \lcp(y, x_{k+1})$ are greater than or
equal to $e = \lcp(x_k, x_{k+1})$, and at
least one of them equals~$e$.

We can therefore reduce our implementation of the $\dslcp$~list to the
following problem.  We are given a dynamic sequence of $m$ values
$e_1, e_2, \ldots, e_m$, and want to maintain the range minima
$\rmq(i,j) = \min \{e_i, e_{i+1}, \ldots, e_j\}$, for $1 \leq i, j
\leq n$, under \emph{monotone} insertions and deletions. A monotone
insertion of $e'$ and $e''$ instead of $e_k$ satisfies $e_k = \min\{e',e''\}$;
a monotone deletion of $e_{k-1}$  and $e_k$ replaced by $e$
satisfies $e = \min\{e_{k-1},e_k\}$ (where we assume $e_0 = e_{m+1} =
-\infty$).  To implement our $\dslcp$~list $L$, we observe that $m =
n-1$ and $e_k = \lcp(x_k, x_{k+1})$, for $1 \leq k \leq n-1$, and that
insertions and deletions in~$L$ correspond to monotone insertions and
deletions of values in our range minima. Note that the insertion of~$y$ can
be viewed as either the insertion of entry~$e'$ to the left of
entry~$e$ (when $e' \geq e'' = e$) or the insertion of~$e''$ to the
right of~$e$ (when $e'' \geq e' = e$).

For some intuition on why monotonicity in a dynamic setting helps consider the following. Suppose we want to build some kind
of tree, whose leaves store the elements $e_1, e_2, \ldots, e_m$ in
left-to-right order. Pick an internal node $u$ that spans, say,
elements $e_i, e_{i+1}, \ldots, e_j$, in its descendant leaves. We
maintain prefix minima $p_k = \min \{e_i, e_{i+1} \ldots, e_k \}$ in $u$
and also suffix minima $q_k = \min \{e_k, e_{k+1}, \ldots,
e_j \}$ for $i \leq k \leq j$. When inserting~$e'$ and $e''$ instead of $e_k$ monotonically, each can change \emph{just two} prefix minima, whereas, in general,
 the prefix minima $p_k, p_{k+1}, \cdots, p_j$ and the suffix minima $q_i, \cdots, q_k$ can all change without our assumption on the
monotonicity. However, with our monotonic assumption, if we insert say $e'$ first between $e_{k-1}$ and $e_{k+1}$, then we insert a new prefix minimum $p = \min \{ p_{k-1}, e' \}$
and $q = \min \{e', q_{k+1} \}$. The same holds for $e''$. We use this fact as a key
observation to obtain constant-time complexity.

For implementing the $\dslcp$~list we adopt a two-level scheme. We
introduce the upper level consisting of the \emph{main tree} in
Section~\ref{sub:main-tree}, and the lower level populated by
\emph{micro trees} in Section~\ref{sub:micro-trees}.  We sketch the
method for combining the two levels in
Section~\ref{sub:combine-levels}.  The net result is a generalization
of the structure of Dietz and Sleator that works for multi-dimensional
keys, without relying on the well-known algorithm of
Willard~\cite{IC::Willard1992} to maintain order in a dense file
(avoiding Willard's data structures for the amortized case has been suggested
in~\cite{BenderEtAl02a}). We focus on insertions since deletions are simpler
and can be treated with partial rebuilding techniques, see e.g.~\cite{Overmars83}, since deletions
replace two consecutive entries with the smallest of the two and so do
not change the range minima of the remaining entries.  When treating
information that can be represented with $O(\log n)$ bits, we will
make use of table lookups in $O(1)$ time. The table lookup idea is explained below. The reader may verify that
we also use basic ideas from previous
work~\cite{ArgVit03,BenderLCA,Harel1}.

\subsection{Main tree}
\label{sub:main-tree}

For the basic shape of the main tree we follow the approach of
weight-balanced trees~\cite{ArgVit03}. The main tree has $m$ leaves (for now we assume $m=n+1$, all the $\lcp$s including the two at either end, later we will
have a smaller main tree and $m$ will be smaller, see Section~\ref{sub:combine-levels}), all on the same level (identified as level $0$), each leaf containing
one entry of our range minima problem. The \emph{weight} $w(v)$ of a node
$v$ is \textit{(i)} the number of its children (leaves), if $v$ is on
level $1$ \textit{(ii)} the sum of the weights of its children, if $v$
is on a level $l>1$.  Let $b>4$ be the \emph{branching parameter}, so
that $b = O(1)$.  We maintain the following constraints on the weight
of a node $v$ on a level $l$.
\begin{enumerate}
\item \label{item:1} If $l=1$, $b\le w(v)\le 2b-1$.
\item \label{item:2} If $l>1$, $w(v)< 2b^l$.
\item \label{item:3} If $l>1$ and $v$ is not the root of the tree,
  $w(v)>\frac{1}{2}b^l$.
\end{enumerate}

From the above constraints it follows that each node on a level $l>1$
in the main tree has between $b/4$ and $4b$ children (with the
exception of the root that can have a minimum of two children).  From
this we can easily conclude that the height of the main tree is $h =
O(\log_b m)=O(\log m)$.

When a new entry is inserted as a new leaf $v$ in the main tree, any
ancestor~$u$ of~$v$ that does not respect the weight constraint is
split into two new nodes and the new child is inserted in the parent
of~$u$ (unless~$u$ is the root, in which case a new root is created).
That rebalancing method has an important property.
\begin{lemma}[\cite{ArgVit03}]
  \label{lem:reb}
  After splitting a node $u$ on level $l>1$ into nodes $u'$ and
  $u''$, at least $b^l/2$ inserts have to be performed below $u'$ (or
  $u''$) before splitting again.
\end{lemma}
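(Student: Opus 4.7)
The plan is to analyze the weight of the two pieces $u'$ and $u''$ immediately after the split, show that neither is close to the upper threshold $2b^l$, and then observe that only inserts in the subtree below a node can increase its weight (by exactly $1$ each). The number of inserts required for $w(u')$ (resp.~$w(u'')$) to again violate constraint~\ref{item:2} is therefore $2b^l$ minus that weight.

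First I would describe the splitting rule. When constraint~\ref{item:2} is first violated at $u$, its weight is at most $2b^l$ (it was strictly less before the last insertion and grew by~$1$). To split, traverse the children of $u$ in order and assign them greedily to $u'$ until the accumulated weight first reaches or exceeds $b^l$; put the remaining children into $u''$. The crucial point is that each child $c$ of $u$ lies on level $l-1$ and therefore satisfies $w(c)<2b^{l-1}$ (for $l-1=1$ this is constraint~\ref{item:1}, and for $l-1\geq 2$ it is constraint~\ref{item:2}; leaves have weight~$1$). Hence adding the last child to $u'$ overshoots the threshold by strictly less than $2b^{l-1}$, giving
\[
 w(u')\;<\;b^l+2b^{l-1}\;=\;b^l\!\left(1+\tfrac{2}{b}\right),
\]
and correspondingly $w(u'')\leq w(u)-w(u')<2b^l-b^l=b^l$. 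Using the assumption $b>4$, we obtain $1+2/b<3/2$, so both $w(u')$ and $w(u'')$ are strictly less than $\tfrac{3}{2}b^l$.

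Next I would use this bound to count inserts. A subsequent split of $u'$ (resp.~$u''$) happens only when its weight reaches $2b^l$, and between two splits of the same node every insertion into its subtree contributes exactly $+1$ to that weight (deletions are handled separately by the global partial-rebuilding mechanism). Therefore the number of insertions below $u'$ (or $u''$) before it can split again is at least
\[
 2b^l - \tfrac{3}{2}b^l \;=\; \tfrac{1}{2}b^l,
\]
which is the claim. I would also remark that the greedy split leaves $u'$ with weight at least $b^l$ and, combined with the bound above, $u''$ with weight at least $b^l-(2b^{l-1}-1)>\tfrac{1}{2}b^l$, so both new nodes satisfy constraint~\ref{item:3} as well and the rebalancing is legitimate.

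The only nontrivial step is verifying that the split can always be made balanced enough, which is exactly where the hypothesis $b>4$ enters: a single ``heavy'' child could have weight just under $2b^{l-1}$, and without $b>4$ the overshoot $2b^{l-1}$ would not be comfortably smaller than $\tfrac{1}{2}b^l$. Once that slack is established, the counting argument is immediate. I would not expect any further obstacle; the rest is bookkeeping about which subtree an insertion falls into and the observation that deletions never trigger the split we are analysing.
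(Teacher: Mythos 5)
The paper does not actually prove this lemma --- it is imported verbatim from the weight-balanced B-tree analysis of Arge and Vitter, so there is no in-paper argument to compare against. Your reconstruction is correct and is essentially the standard proof from that source: you exhibit a greedy split in which every child contributes weight strictly less than $2b^{l-1}$, deduce that both fragments end up with weight at most $b^{l}+2b^{l-1}<\tfrac{3}{2}b^{l}$ (using $b>4$), and then count that reaching the split threshold $2b^{l}$ again requires at least $\tfrac{1}{2}b^{l}$ leaf insertions, since each insertion below a node raises its weight by exactly one and a child splitting does not change the parent's weight. Your side remark that the same slack shows $u''$ satisfies the lower weight constraint (weight exceeding $\tfrac{1}{2}b^{l}$) is also the right thing to check, since it is what makes the split legitimate rather than merely well-defined. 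I see no gap; the only caveat worth stating explicitly is the one you already flag, namely that deletions are excluded from the count because the paper handles them by global partial rebuilding rather than by node merging.
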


The nodes of the main tree are augmented with two secondary
structures.

The first secondary structure is devoted to $\lca$~queries.  Each
internal node $u$ is associated with a numeric identifier $\sibnum(u)$
representing its position among its siblings; since the maximum number
of children of a node is a constant, we need only a constant number of
bits, say~$c_b$, to store each identifier.  Each leaf $v$ has two
vectors associated, $\pathstring_v$ and $\ancestors_v$. Let $u_i$ be
the $i$th ancestor of $v$ (starting from the root): the $i$th location
of $\pathstring_v$ contains the identifier $\sibnum(u_i)$, and the
$i$th location of $\ancestors_v$ contains a pointer to $u_i$.  Note
that $\pathstring_v$ occupies $h \times c_b = O(\log m)$ bits and so
we can use table lookup, which we will shortly explain.  These auxiliary vectors are used to find the
$\lca$ between any two leaves $v'$, $v''$ of the main tree in constant
time.  First, we find $j=\lcp(\pathstring_{v'}, \pathstring_{v''})$ by
table lookups; then, we use the pointer in $\ancestors_v[j]$ to access
the node.

The table lookup method, in our case to find the $\lcp$ of strings that are of $O(\log m)$ length is done as follows.
First recall that $m \leq n$ and hence the string length of $\pathstring_v$ is also $O(\log n)$ and fits into a constant number of words.
Now, let us represent $\pathstring_v$
in a few consecutive words, $\pathstring_v^1$, $\pathstring_v^2, \ldots, \pathstring_v^c$  so that we use exactly $(\log n)/3$ bits
of each word (we could have chosen a different small constant to divide $\log n$). This still keeps the representation
in a constant number of words. Now, to find the $\lcp$ between $\pathstring_{v'}$ and $\pathstring_{v''}$ we compare
$\pathstring_{v'}^1$ with $\pathstring_{v''}^1$ etc. till we find $\pathstring_{v'}^i \not= \pathstring_{v''}^i$. Now, if we can find the $\lcp$ between
$\pathstring_{v'}^i$ and $\pathstring_{v''}^i$ the $\lcp$ between $\pathstring_{v'}$ and $\pathstring_{v''}$ will be computable. However,
there are only $(\log n)/3$ bits used in $\pathstring_{v'}^i$ and $\pathstring_{v''}^i$. So, we can precompute an $\lcp$ table for
all the $(2^{(\log n) / 3} = n^{1/3})$ x $(2^{\log n / 3} = n^{1/3})$ possible pairs of strings that we may have. This table will be of size $n^{2/3}$
and, hence, can be precomputed in a straightforward manner. Now all we need to do is to directly access the correct table value by addressing it with
$\pathstring_{v'}^i$ and $\pathstring_{v''}^i$ and we are done.\footnote{Alternatively, we can perform the exclusive bitwise or and find the most significant bit set to~1.}

The second secondary structure is devoted to maintain some range
minima.  Each internal node $u$ has an associated a doubly linked
list~$\lcplist_u$ that contains a copy of all the entries in the
descendant leaves of~$u$. The order in~$\lcplist_u$ is
identical to that in the leaves (i.e.~the lexicographical order in
which the strings are maintained).  As previously mentioned, we
maintain the prefix minima and the suffix minima in~$\lcplist_u$.
We also keep the minimum entry of $\lcplist_u$. Its
purpose is to perform the following query in $O(b)=O(1)$ time: given any
two siblings $u'$ and $u''$, compute the minimum of the entries stored
in the $\lcplist$s of the siblings between $u'$ and $u''$ (excluded).
Finally, we associate with each leaf~$v$ a vector~$\clones_v$ containing
pointers to all the copies of the entry in~$v$, each copy stored in
the doubly linked lists $\lcplist$s of $v$'s~ancestors.

Because of the redundancy of information, $O(m\log m)$ words of memory
is the total space occupied by the main tree, but now we are able to
answer a general range minimum query $\rmq(i,j)$ for an interval $[i
\pp j]$ in constant time.  We first find the lowest common
ancestor~$u$ of the leaves~$v_i$ and~$v_j$ corresponding to the~$i$th
and the~$j$th entries, respectively.  Let~$u_i$ be the child of~$u$
leading to~$v_i$ and~$u_j$ the child of~$u$ leading to~$v_j$ (they
must exist and we can use~$\pathstring_{v_i}$ and~$\pathstring_{v_j}$
for this task). We access the copies of the entries of~$i$ and~$j$
in~$\lcplist_{u_i}$ and~$\lcplist_{u_j}$, respectively,
using~$\clones_{v_i}$ and~$\clones_{v_j}$. We then take the suffix
minimum anchored in~$i$ for~$\lcplist_{u_i}$, and the prefix minimum
anchored in~$j$ for~$\lcplist_{u_j}$.  We also take the minima in the
siblings between~$u_i$ and~$u_j$ (excluded). The minimum among these
$O(1)$ minima is then the answer to our query for interval $[i \pp
j]$.

\begin{lemma}
  \label{lemma:main-tree}
  The main tree for $m$ entries occupies $O(m \log m)$ space, and
  support range minima queries in $O(1)$ time and monotone updates in
  $O(\log m)$ time.
\end{lemma}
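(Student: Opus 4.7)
The plan is to verify the three claims—space, query time, update time—separately, mostly by summing contributions already implicit in the construction of Section~\ref{sub:main-tree}. For \emph{space} I would sum over the auxiliary fields: each $\pathstring_v$ fits in $h\cdot c_b=O(\log m)$ bits, hence $O(1)$ words, contributing $O(m)$ overall; each $\ancestors_v$ and each $\clones_v$ has one entry per ancestor of $v$, so $O(\log m)$ per leaf and $O(m\log m)$ in total. For the lists $\lcplist_u$ I would use the fact that on every level of the main tree the sum of the weights $w(u)$ is exactly $m$, and $|\lcplist_u|=w(u)$; summing across the $O(\log m)$ levels gives $O(m\log m)$. The $\sibnum$ identifiers contribute $O(1)$ per node and are negligible.

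For \emph{query} I would simply follow the recipe outlined just before the lemma and check each step is $O(1)$. Computing the $\lca$ of $v_i,v_j$ reduces to $\lcp(\pathstring_{v_i},\pathstring_{v_j})$; since both strings fit in $O(\log m)$ bits, this is $O(1)$ by the table-lookup method already described. Then $u,u_i,u_j$ are read off from $\ancestors$, the copies of $e_i,e_j$ in $\lcplist_{u_i},\lcplist_{u_j}$ are located via $\clones$, and the stored suffix/prefix minima anchored there, together with the $O(b)=O(1)$ minima of the intervening siblings, are combined by a final minimum over $O(1)$ values.

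For \emph{update}, I would first walk up from the new leaf and, in each of the $O(\log m)$ ancestors $u$, splice $e',e''$ into $\lcplist_u$ in place of $e$ and refresh the prefix and suffix minima. Here the monotonicity hypothesis $e=\min(e',e'')$ is the crux: a short case analysis (WLOG $e'=e\le e''$) shows that all prefix minima to the left of the old position of $e$ are unchanged, the two new positions both inherit the old value $p_k$ (using $e'=e$ and $p_k\le e\le e''$), and the remaining prefix minima simply shift by one; the argument for suffix minima is symmetric, so $O(1)$ work suffices per ancestor. Adding the $O(\log m)$ updates of $\pathstring,\ancestors,\clones$ for the new leaf, the non-structural cost is $O(\log m)$. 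For rebalancing, a split of a level-$l$ node costs $O(b^l)$ (two new $\lcplist$'s, child redistribution, refreshing the auxiliary vectors of the $O(b^l)$ leaves under $u$); by Lemma~\ref{lem:reb} such a split recurs only every $\Omega(b^l)$ inserts below the new nodes, so the amortized cost per level is $O(1)$, summing to $O(\log m)$ across the $O(\log m)$ levels. The main obstacle is precisely this $O(1)$-per-level monotone update of the prefix/suffix minima; every other step is routine bookkeeping over the weight-balanced framework of~\cite{ArgVit03}.
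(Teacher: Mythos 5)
Your proof follows the paper's argument essentially step for step: the same per-structure space accounting (one word for $\pathstring_v$, $O(\log m)$ words per leaf for $\ancestors_v$ and $\clones_v$, $O(m)$ per level for the $\lcplist$'s), the same $O(1)$ query recipe via table-lookup $\lca$ plus anchored prefix/suffix minima and intervening sibling minima, and the same use of monotonicity to confine the prefix/suffix-minima repairs to $O(1)$ positions per ancestor, with Lemma~\ref{lem:reb} charging the $O(b^l)$ split cost. The one place you stop short of the paper is that you leave the split cost amortized, whereas the paper additionally invokes the lazy/incremental construction of~\cite{ArgVit03} to make each of the $O(\log m)$ steps worst-case constant --- a detail that matters later, since the bucket-splitting scheme of Section~\ref{sub:combine-levels} spreads these steps across subsequent operations and therefore needs them individually bounded.
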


In order to complete the proof of Lemma~\ref{lemma:main-tree}, it
remains to see how the tree can be updated. We are going to give a
``big picture'' of the techniques used, leaving the standard details
to the reader.  We already said that deletions can be treated lazily
with the standard partial rebuilding technique. We follow the same
approach for treating the growth of the height~$h$ of the main tree
and the subsequent variations of its two secondary
structures,~$\pathstring$,~$\ancestors$,~$\lcplist$, and~$\clones$.
From now on, let us assume w.l.o.g.\mbox{} that the insertions do not
increase the height of the main tree.

When a new element $e$ is inserted, we know by hypothesis a pointer to
its predecessor $e_k$ (or successor $e_{k+1}$) and the pointer to the
leaf $v$ of the main tree that receives a new sibling~$v'$ and
contains the ($\lcp$) value to be changed.  The creation and
initialization of the vectors associated with the new leaf~$v'$ can be
obviously done in $O(\log m)$ time.  Then we must propagate the
insertion of the new entry in~$v'$ to its ancestors.  Let~$u$ be one
of these ancestors. We insert the entry into its position
in~$\lcplist_u$, using~$\clones_{v}$, which is correctly set (and
useful for setting~$\clones_{v'}$).  As emphasized at the beginning of
Section~\ref{sec:implementation-dslcp}, the monotonicity guarantees
that the only prefix minima changing are constant in number and near
to the new entry (an analogous situation holds for the suffix minima).
As long as we do not need to split an ancestor, we can therefore
perform this update in constant time per ancestor.

If an ancestor~$u$ at level~$l$ needs to split in two new nodes $u'$
and~$u''$ so as to maintain the invariants on the weights, there may be many
$\pathstring_v$, $\ancestors_v$ that need to be changed. However, a careful check verifies that there are $O(b^l)$ such values that need to be changed. Moreover,
we need to
recalculate $O(|\lcplist_u|)=O(b^l)$ values of prefix and suffix
minima. By Lemma~\ref{lem:reb} we can immediately conclude that the
time needed to split~$u$ is $O(1)$ in an amortized sense. A lazy approach
to the construction of the lists~$\lcplist_{u'}$ and~$\lcplist_{u''}$ and then to the $O(b^l)$ $\pathstring_v$'s and  $\ancestors_v$'s that need to be changed
will lead to the desired worst case constant time complexity for the
splitting of an internal node. This construction is fairly technical
but follows closely what has been introduced in~\cite{ArgVit03}.

\subsection{Micro trees for indirection}
\label{sub:micro-trees}

We now want to reduce the update time of Lemma~\ref{lemma:main-tree}
to $O(1)$ in the worst case and the space to $O(m)=O(n)$ words of
memory.  We use indirection~\cite{BenderEtAl02a} by storing
$O(m/\log^2n)$ lists of entries, each of size $O(\log^2 n)$, so that
their concatenation provides the list of entries stored in the leaves
of the main tree described in Section~\ref{sub:main-tree}. We call
buckets these small lists and use the following result as
in~\cite{DS87}.

\begin{lemma}[\cite{ACTAI::LevcopoulosO1988}]
  \label{lemma:levover}
  If the largest bucket is split after every other $k$ insertions into
  any buckets, then the size of any bucket is always $O(k\log n)$.
\end{lemma}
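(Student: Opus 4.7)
The plan is a convex-potential argument, combined with the key observation that the maximum bucket size can grow by at most $k$ between two consecutive splits. Throughout the argument I write $S^*$ for the current maximum bucket size and assume, for concreteness, that a split of a bucket of size $S$ produces two buckets of size $\lceil S/2\rceil$ and $\lfloor S/2\rfloor$.

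The first step is to set up the entropy-like potential
\[
\Phi \;=\; \sum_b |b|\log|b|.
\]
A single insertion into a bucket of size $s\le S^*$ changes $\Phi$ by
\[
(s+1)\log(s+1) - s\log s \;\le\; \log S^* + O(1),
\]
while a split of the maximum bucket (of size $S^*$) into two equal halves changes $\Phi$ by
\[
2\cdot(S^*/2)\log(S^*/2) - S^*\log S^* \;=\; -S^*.
\]
Hence over one full round of $k$ insertions followed by one split, the change in $\Phi$ is at most $k\log S^* + O(k) - S^*$. This quantity is strictly negative whenever $S^* \ge Ck\log n$ for a sufficiently large constant $C$, since $S^* \le n$ implies $\log S^* = O(\log n)$. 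Thus any round in which the maximum bucket is larger than $Ck\log n$ strictly decreases $\Phi$.

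The second step is to convert this drift into a pointwise bound on $S^*$. Combined with the invariant that $S^*$ can grow by at most $k$ between consecutive splits, a standard backtracking argument works: if $S^*(T) \ge 2Ck\log n$ at some time $T$, one rewinds through the last $\Omega(\log n)$ rounds, during all of which $S^*$ must remain at least $Ck\log n$ (it can drop by at most $k$ per round plus a halving after the split), each of which strictly decreases $\Phi$, producing an arbitrarily large negative value of $\Phi$ in contradiction with $\Phi\ge 0$.

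The main obstacle is the bookkeeping needed to promote the per-round drift of $\Phi$ to a worst-case pointwise bound, because in rounds where $S^*$ dips momentarily below $Ck\log n$ the potential may transiently grow by up to $k\log n$, and one must verify these transient increases cannot compound. As a backup, I would instead use a direct combinatorial argument: any bucket $b^*$ with $|b^*|=B$ can escape being the split target for $L$ rounds only if in every such round there is a competing bucket of size at least $|b^*|\ge B - Lk$; geometric halving of these competitors, together with the global mass budget $\sum_b|b|\le n$ spread across the classes $[B/2,B),[B,2B),[2B,4B),\ldots$, limits the chain to $L = O(\log(n/B))$, giving $B \le Lk + B_0 = O(k\log n)$ after the inductive base case on the size $B_0$ of $b^*$ at its last creation.
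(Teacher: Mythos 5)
First, note that the paper does not prove this lemma at all: it is imported verbatim from Levcopoulos and Overmars \cite{ACTAI::LevcopoulosO1988}, so there is no internal proof to compare against; your attempt has to stand on its own.

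Your main (potential-function) argument has a genuine quantitative gap at the final step. The drift computation is fine: a round in which the split bucket has size $S^* \ge Ck\log n$ decreases $\Phi=\sum_b |b|\log|b|$ by at least $S^* - k\log S^* - O(k) = \Omega(k\log n)$. But the backtracking window you can certify is short: starting from $S^*(T)\ge 2Ck\log n$ and using $M_{i}\ge M_{i+1}-k$, you only guarantee $S^*\ge Ck\log n$ for the last $O(\log n)$ rounds. Hence the total forced decrease is only $O(k\log^2 n)$, whereas $\Phi$ at the start of that window can legitimately be as large as $\Theta(n\log\log n)$ or more (e.g.\ $\Theta(n/(k\log n))$ buckets each of size $\Theta(k\log n)$), so $\Phi$ never goes negative and no contradiction arises. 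This is not a matter of ``transient increases compounding''; the per-element potential $|b|\log|b|$ is simply too flat for the split of the single maximum bucket to dominate the aggregate. A potential of the form $\sum_b 2^{|b|/(ck)}$, in which the maximum bucket carries a constant fraction of the total, is what makes this style of argument close, since then $2^{S^*/(ck)}\le \Phi \le \mathrm{poly}(n)$ yields $S^*=O(k\log n)$ directly.

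Your backup combinatorial argument is the right general direction (it is close in spirit to the actual Levcopoulos--Overmars proof) but the decisive step is asserted, not proved. The difficulty is that a competitor of size $\approx B$ that is split into halves of size $\approx B/2$ can regrow and compete again, so the $L$ splits in the window cannot be charged to distinct large buckets drawn from a fixed mass budget. If you quantify the regrowth rate (at most $k$ elements per round enter any lineage) and charge each split to the $\Omega(B)$ persistent mass of its lineage, the single-scale version of your count gives only $L=O(n/B)$ against $L=\Theta(B/k)$, i.e.\ $B=O(\sqrt{kn})$, which falls well short of $O(k\log n)$. Getting the logarithmic bound genuinely requires the multi-scale induction over the dyadic size classes that you allude to but do not carry out (and, as written, your classes $[B,2B),[2B,4B),\ldots$ run upward from the maximum, where there is nothing to count). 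As it stands, neither route establishes the lemma.
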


Using Lemma~\ref{lemma:levover} with $k=\log n$, we can guarantee that
the small lists will always be of size $O(\log^2 n)$. We now focus on
how to store one of them in $O(\log^2 n)$ space, such that the overall
space is $O(m)=O(n)$ words (plus $o(n)$ for some shared lookup tables)
and the following constant-time operations are supported: insertion,
deletion, split, merge, range minima (thus solving also prefix minima and
suffix minima).

Each small list is implemented using one micro tree and $O(\log n)$
succinct Cartesian trees plus $O(1)$ lookup tables of size $o(n)$ that
are shared among all the small lists, thus preserving asymptotically
the $O(n)$ overall space bound.

A micro tree satisfies the invariants~\ref{item:1}--\ref{item:3} of
the main tree in Section~\ref{sub:main-tree}, except that each node contains $O(\log n/\log \log n)$
entries and the fan out is now $\Theta(\log n/ \log\log n)$ instead of
$O(1)$. Within each node, we also store a sorted array of $O(\log
n/\log \log n)$ pointers to the entries in that node: since each
pointer requires $O(\log \log n)$ bits, we can fit this array in a
single word and keep it sorted under operation insertion, deletion,
merge and split, in $O(1)$ time each (still table lookups).

We guarantee that a micro tree stores the $O(\log^2 n)$ entries in
$O(1)$ levels, since its height is $h = O(1)$. To see why we partition
these entries into sublists of size $O(\log n/\log\log n)$, which form
the leaves of the micro tree.  Then, we take the first and the last
entry in each sublist, and copy these two entries into a new sublist
of size $O(\log n/\log \log n)$, and so on, until we form the root.
Note that a micro tree occupies $O(\log^2 n)$ words of memory.

Each node of the micro tree has associated a succint Cartesian
tree~\cite{BenderLCA,Harel1,Fischer:2011:SEP}, which locally supports
range minima, split, merge, insert and delete in $O(1)$ time for the
$O(\log n)$ keys. The root of the Cartesian tree is the minimum entry
and its left (right) subtree recursively represents the entries to the
left (right). The base case corresponds to the empty set which is
represented by the null pointer. The observation in~\cite{BenderLCA}
is that the range minimum from entry~$i$ to entry~$j$ is given by the
entry represented by $\lca(i, j)$ in the Cartesian tree.  The most
relevant feature is that the range minimum can be computed without
probing any of the $O(\log n/\log \log n)$ entries of the micro-tree
node $u$. Indeed, by solely looking at the tree topology of that
Cartesian tree for $u$ \cite{Fischer:2011:SEP}, the preorder number of
the $\lca$ gives the position of the entry in $u$. Since the Cartesian
tree has size $O(\log n/\log \log n)$, we can succinctly store it in a
single word using only $O(\log n/\log \log n)$ bits. Using a set of
$O(1)$ lookup tables and the Four Russian trick, it is now a standard
task to locate a range minima, split into or merge two succinct
Cartesian trees, insert or delete a node in it, in $O(1)$ time each.

Note that the succinct Cartesian tree gives us a position $r$ of an
entry in the micro-tree node $u$. We therefore need to access, in
constant time, the $r$th entry in $u$ after that, and the sorted array
of $O(\log n/\log \log n)$ pointers inside $u$ is aimed at this goal.

\subsection{Implementing the operations (Theorem~\ref{the:dslcp-list})}
\label{sub:combine-levels}

In order to prove Theorem~\ref{the:dslcp-list}, we adopt a high-level
scheme similar to that of Dietz-Sleator lists~\cite{DS87}.  The main
tree has $m = O(n / \log^2 n)$ leaves. Each leaf is associated with a
distinct bucket (see Section~\ref{sub:micro-trees}), so that the
concatenation of these buckets gives the order kept in the
$\dslcp$~list. Each bucket contributes to the main tree with its
leftmost and rightmost entries (actually, the range minima of its two
extremal entries).  To split the largest bucket every other $\log n$
insertions, we keep pointers to the buckets in a queue sorted by
bucket size. We take the largest such bucket, split it in $O(1)$ time
and insert two new entries in the main tree. Fortunately, we can
perform incrementally and lazily the $O(\log n)$ steps for the
insertion (Lemma~\ref{lemma:main-tree}) of these two entries before
another bucket split occur. At any time only one update is pending in
the main tree by an argument similar to that in~\cite{DS87}.

\section{An Application of the Technique}
\label{sec:some-applications}

We now describe an application of Theorems~\ref{the:dslcp-list}
and~\ref{the:string-data-structure} in other areas.

\subsection{\boldmath Dynamic lowest common ancestor ($\lca$)}
\label{sub:dynamic-LCA}

The lowest common ancestor problem for a tree is at the heart of
several algorithms~\cite{BenderLCA,Harel1}. We consider here the
dynamic version in which insertions add new leaves as children to
existing nodes and deletions remove leaves. The more general (and
complicated) case of splitting an arc by inserting a node in the
middle of the arc is treated in~\cite{SODA99*235}.

We maintain the tree as an Euler tour, which induces an implicit
lexicographic order on the nodes. Namely, if a node is the $i$th child
of its parent, the implicit label of the node is~$i$. The root has
label~0. (These labels are mentioned only for the purpose of
presentation.) The implicit string associated with a node is the
sequence of implicit labels obtained in the path from the root to that
node plus an endmarker that is different for each string (also when
the string is duplicated; see the discussion below on insertion).
Given any two nodes, the $\lcp$~value of their implicit strings gives
the string implicitly represented by their~$\lca$. We maintain the
Euler tour with a $\dslcp$~list~$L$ in $O(n)$ space (see
Section~\ref{sec:gener-techn-strings}), where $n$ is the number of nodes (the
strings are implicit and thus do not need to be stored).  We also
maintain the dynamic data structure in~\cite{AH} to find the level
ancestor of a node in constant time.

Given any two nodes~$u$ and~$v$, we compute~$\lca(u,v)$ in constant
time as follows. We first find $d = \lcp(s_u, s_v)$ using~$L$, where
$s_u$ and $s_v$ are the implicit strings associated with~$u$ and~$v$,
respectively. We then identify their ancestor at depth~$d$ using a
level ancestor query.

Inserting a new leaf duplicates the implicit string~$s$ of the leaf's
parent, and puts the implicit string of the leaf between the two
copies of~$s$ thus produced in the Euler tour. Note that we satisfy
the requirements described in Section~\ref{sec:gener-techn-strings} for the
insert, as we know their $\lcp$ values. By
Theorem~\ref{the:dslcp-list}, this takes $O(1)$ time. For a richer
repertoire of supported operations in constant time, we refer
to~\cite{SODA99*235}.

\begin{theorem}
  \label{the:dynamic-lca}
  The dynamic lowest common ancestor problem for a tree, in which
  leaves are inserted or removed, can be solved in $O(1)$ time per
  operation in the worst case, using a $\dslcp$~list and the
  constant-time dynamic level ancestor.
\end{theorem}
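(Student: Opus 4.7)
The plan is to implement the scheme already sketched in the excerpt and verify that every primitive costs $O(1)$ in the worst case. First I would represent the tree by its Euler tour, maintained as the sequence of entries of a $\dslcp$~list~$L$. Each entry corresponds to a visit of a node~$u$ and is conceptually labelled by the implicit string $s_u$ consisting of the child-indices along the root-to-$u$ path, terminated by a private endmarker that makes duplicated visits of the same node comparable and lexicographically adjacent. Observe that $|\lcp(s_u,s_v)|$ equals the depth of $\lca(u,v)$, and that the Euler tour is already sorted with respect to these implicit strings, so $L$ is a legitimate $\dslcp$~list. In parallel I would maintain the constant-time dynamic level ancestor structure of~\cite{AH} on the tree.

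For a query $\lca(u,v)$ I would take any two representative entries for $u$ and $v$ in~$L$, call $d \PRENDE \dslcp(s_u, s_v)$, which by Theorem~\ref{the:dslcp-list} costs $O(1)$, and then return the ancestor of~$u$ at depth~$d$ via one level-ancestor query, again $O(1)$. Correctness follows from Lemma~\ref{minlcp} applied to the implicit strings, together with the fact that the Euler-tour ordering sandwiches every descendant of $\lca(u,v)$ strictly between the chosen entries for $u$ and $v$.

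For the insertion of a new leaf~$w$ as the next child of an existing node~$p$, I would splice into the Euler tour one extra visit of $p$ followed by the new entry for $w$, placing both immediately after the last occurrence of~$p$ currently in~$L$. The only nontrivial requirement imposed by Theorem~\ref{the:dslcp-list} is that, for each of the two new entries, I hand over the $\lcp$ with its new predecessor and with its new successor in~$L$. Here the monotone structure of the Euler tour makes this immediate: the new duplicate copy of $s_p$ has $\lcp$ equal to $\mathrm{depth}(p)$ with both neighbours (the endmarkers force the value to be exactly the depth), while the new leaf's string $s_w$ has $\lcp$ equal to $\mathrm{depth}(p)$ with the copies of $s_p$ flanking it. All these depths are known directly from $p$, so the $O(1)$ insertion bound of Theorem~\ref{the:dslcp-list} applies. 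Deletion of a leaf is the symmetric operation: remove the leaf's entry and collapse the two surrounding copies of its parent into a single entry, which by Theorem~\ref{the:dslcp-list} again costs $O(1)$. The level-ancestor structure of~\cite{AH} is updated within the same worst-case bound.

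The main conceptual obstacle is precisely the requirement, at insertion time, that the $\lcp$ values with both neighbours be delivered in $O(1)$; if the new entries were not inserted as immediate Euler-tour neighbours of an existing copy of the parent, supplying these values could require scanning a path. The resolution, as emphasised above, is that the Euler-tour encoding places the two fresh entries next to an existing copy of $s_p$, whose depth is known by direct inspection of~$p$, so the required $\lcp$ values are read off in constant time without ever materialising the strings. Combining this with Theorem~\ref{the:dslcp-list} and the level ancestor structure yields the claimed $O(1)$ worst-case bound for every supported operation.
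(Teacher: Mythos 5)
Your proposal is correct and follows essentially the same route as the paper: Euler-tour entries labelled by implicit root-to-node strings with private endmarkers, the $\lcp$ of two such strings giving the depth of the $\lca$, a constant-time level-ancestor query to retrieve the node itself, and leaf insertion realized by duplicating the parent's entry so that all required $\lcp$ values equal the parent's depth and the $O(1)$ bound of Theorem~\ref{the:dslcp-list} applies. The only (inessential) discrepancy is the order of the two spliced entries: the paper places the leaf's string \emph{between} the two copies of the parent's string, which is exactly what makes the leaf's entry have $\lcp$ equal to $\mathrm{depth}(p)$ with both of its neighbours, as you claim.
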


\section{The General Technique for Suffixes}
\label{sec:technique-suffixes}

We now consider a special case of Theorem~\ref{the:string-data-structure} when the strings are limited to be the suffixes of the same string.

\begin{theorem}
  \label{the:general-suffix-data-structure}
  Let~$\DDD$ be a comparison-driven data structure such that the
  insertion of a key into~$\DDD$ identifies the predecessor or the
  successor of that key in~$\DDD$. Then,~$\DDD$ can be transformed
  into a data structure $\DDD'$ for
  suffixes of a string $s$ of length $n$ such that
  \begin{itemize}
  \item the space complexity of $\DDD'$ is $\SSS(n) + O(n)$ for
    storing~$n$ suffixes as keys (just store the references to the suffixes,
    not the suffixes themselves), where $\SSS(n)$ denotes the native space
    complexity of~$\DDD$ in the number of memory cells occupied;
  \item each operation involving $O(1)$ suffixes in $\DDD'$ takes
    $O\bigl(\TTT(n)\bigr)$ time, where $\TTT(n)$ denotes the time
    complexity of the corresponding operation originally supported
    in~$\DDD$;
  \item an insertion operation of suffix $ay$, \emph{not stored} in
    $\DDD'$, where $a$ is a character and all the suffixes of $y$ are stored in $\DDD'$
   takes $O\bigl(\TTT(n)\bigr)$ time.
  \item each operation (other than the insertion mentioned) involving a string~$y$ \emph{not stored} in
    $\DDD'$ takes $O\bigl(\TTT(n) + |y|\bigr)$ time, where $|y|$
    denotes the length of~$y$.
  \end{itemize}
\end{theorem}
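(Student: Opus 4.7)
The plan is to dispatch all bullets except the third by direct appeal to Theorem~\ref{the:string-data-structure}: suffixes of $s$ are a special case of strings, so the space bound, the bound for operations touching only $O(1)$ elements already in~$\DDD'$, and the general bound $O\bigl(\TTT(n)+|y|\bigr)$ for an unrelated query string $y\not\in\DDD'$ are all inherited verbatim. All the novelty lies in the third bullet: inserting the suffix $ay$ in time $O\bigl(\TTT(n)\bigr)$, without the additive $|y|$ charge that the general analysis would incur just to read $y$.

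The core observation is that, under the hypothesis that every suffix of $y$ is already stored in $\DDD'$, each three-way comparison between $ay$ and any $x\in\DDD'$ can be resolved in $O(1)$ time \emph{without ever scanning $ay$ symbol by symbol}. To compute $\lcp(ay,x)$, first inspect the initial character $x[1]$, which is reachable in $O(1)$ time from the pointer to~$x$. If $x[1]\neq a$ then $\lcp(ay,x)=0$ and the comparison is decided. Otherwise $x=a\,x'$ where $x'$ is the one-position shift of $x$; in the intended setup of online right-to-left suffix insertion (equivalently, the standing invariant that $\DDD'$ is closed under the prefix-shift $x\mapsto x[2\pp|x|]$ on the suffixes stored in it), $x'$ also lies in $\DDD'$, since $x$ is itself a suffix of $y$ and hence $x'$ is a strictly shorter suffix of $y$. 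Then we query $\ell=\dslcp(y,x')$ in $O(1)$ time by Theorem~\ref{the:dslcp-list} and return $\lcp(ay,x)=1+\ell$, using the identity $\lcp(a\,y,a\,x')=1+\lcp(y,x')$.

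Equipped with this $O(1)$ comparison oracle, the insertion proceeds exactly as in Section~\ref{sub:exploiting-lcp-values}: we simulate the insertion of $ay$ inside the underlying comparison-driven structure~$\DDD$, resolving each of its at most $\TTT(n)$ comparisons by the oracle. The simulation costs $O\bigl(\TTT(n)\bigr)$ in total, and by the hypothesis on~$\DDD$ it returns the predecessor $x_{k-1}$ (or the successor $x_k$) of $ay$ in the sorted order. One or two more oracle calls produce $\lcp(x_{k-1},ay)$ and $\lcp(ay,x_k)$, which are exactly the side information required to splice $ay$ into the $\dslcp$~list~$L$ in $O(1)$ further time by Theorem~\ref{the:dslcp-list}. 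The total cost of the insertion is therefore $O\bigl(\TTT(n)\bigr)$, as claimed.

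The main delicate point of this proposal is justifying that the shifted suffix $x'$ is \emph{guaranteed} to live in $\DDD'$ whenever the compared key $x$ begins with the character $a$. This is what replaces the na\"ive $O(|y|)$ scan in Lemma~\ref{lemma:lcp_on_the_fly} by an $O(1)$ table lookup, and it is the only place where the hypothesis ``all suffixes of $y$ are in $\DDD'$'' is used; the rest of the argument is just routine reuse of the machinery proved for Theorem~\ref{the:string-data-structure}, and the fourth bullet inherits its bound unchanged from Lemma~\ref{lemma:lcp_on_the_fly}.
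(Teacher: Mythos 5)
Your proposal is correct and follows essentially the same route as the paper: all bullets but the third are inherited from Theorem~\ref{the:string-data-structure}, and the insertion of $ay$ is reduced to $O(1)$-time comparisons via the identity $\lcp(ay,ax')=1+\lcp(y,x')$ with $y,x'$ already in the $\dslcp$~list. The only detail you leave implicit is how to obtain, in $O(1)$ time, the handle on the $L$-entry of the shifted suffix $x'$ given the entry of $x$; the paper makes this explicit by maintaining suffix links $\slink(ax')\mapsto x'$ as suffixes are inserted in reverse order.
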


Theorem~\ref{the:general-suffix-data-structure} is very similar to Theorem~\ref{the:string-data-structure}. Hence, the correctness is as well. The difference between the two is the claimed running time for the insertion operation.
We now turn to showing that the claimed running time of the insertion operation of Theorem~\ref{the:general-suffix-data-structure} can indeed be implemented.

Note the special requirement of the insertion operation; when inserting $ay$ into $L$ all suffixes of $y$ must already be in $L$. In other words, a careful insertion of suffixes in reverse order is required. This will be used
to achieve the desired time bound.

To do so, we first augment the $\dslcp$~list~$L$ of
Theorem~\ref{the:string-data-structure} with suffix links. A suffix link
$\slink(ay)$ points to the position of $y$ inside~$L$.
For sake of completeness we assume that the empty string, $\epsilon$, is always in $L$ and that $\slink(\sigma)$ points to $\epsilon$, where $\sigma$ is the suffix of length one.


Before inserting~$ay$ into $\DDD'$, the suffix links are defined for every suffix of $y$ (including $y$ itself).
The current entry in~$L$ is~$y$. So, it is immediate to set up $\slink(ay)$. The predecessor $p_{ay}$
and the successor $s_{ay}$ of $ay$ will be found using the algorithm of $\DDD$. However, we desire to achieve time $O\bigl(\TTT(n)\bigr)$ for this insertion.
Hence, the challenge is to implement the algorithm in $O(1)$ time per comparison. Also, the $\lcp$s $\lcp(p_{ay},ay)$ and $\lcp(ay,s_{ay})$ need to be computed.

Each time $ay$ is compared to a suffix $x$ we can directly evaluate whether $x$ begins with an $a$ (in constant time). If it begins with $\sigma$ different from $a$ then we immediately
know (from the comparison) the lexicographical ordering between $x$ and $ay$ (and that $\lcp(x,ay)=0$). If $x$ begins with $a$, i.e. $x=az$, then $\slink(x)$ points to $z$. Since both $y$ and $z$ are suffixes in $\DDD'$ then by
Theorem~\ref{the:dslcp-list} in $O(1)$ time we can compute $\lcp(y,z)$, which implies that we can compute $\lcp(ay,x)=\lcp(y,z)+1$ in $O(1)$ time. The characters at location $\lcp(ay,x)+1$ of the suffixes
$ay$ and $x$ are sufficient to determine the lexicographic ordering of the two strings. Hence, the lexicographic ordering and the $\lcp$ of $ay$ with any other suffix in $L$ can be computed in $O(1)$ time.

\section{Suffix Technique Applications}
\label{sec:suffix-application}

We now describe a couple of applications of Theorems~\ref{the:dslcp-list}
and~\ref{the:suffix-data-structure}.

\subsection{Suffix sorting}
\label{sub:suffix-sorting}

Suffix sorting is very useful in data compression, e.g. Burrows-Wheeler transform~\cite{BW94}, and in text indexing
(suffix arrays~\cite{MM93}). The computational problem is, given an
input string~$T$ of length~$n$, how to sort lexicographically the
suffixes of~$T$ efficiently.  Let $s_1$, $s_2$,~\ldots, $s_n$ denote
the suffixes of~$T$, where $s_i = T[i \pp n]$ corresponds to the
$i$th suffix in~$T$.

\begin{theorem}
  \label{the:suffix-data-structure}
  Let~$\DDD'$ be a data structure for managing suffixes obtained
  following Theorem~\ref{the:suffix-data-structure}. Then, all the
  suffixes of an input string of length~$n$ can be inserted
  into~$\DDD'$, in space $O(n)+\SSS(n)$ and time
  \[
  O\left(n+\sum_{i=1}^n \TTT(i) \right),
  \]
  where $\TTT(\cdot)$ denotes the time complexity of the insert operation
  in the original data structure~$\DDD$ from which~$\DDD'$ has been
  obtained. The suffixes can be retrieved in lexicographic order in
  linear time.
\end{theorem}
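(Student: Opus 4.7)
The plan is to insert the suffixes into $\DDD'$ in order of increasing length, i.e., starting from $s_n = T[n\pp n]$ and ending with $s_1 = T[1 \pp n]$. This reverse-length ordering is exactly what is needed so that, at every step, the precondition of the special insertion operation in Theorem~\ref{the:general-suffix-data-structure} is satisfied.

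First I would handle the base case: insert $s_n$, which is a single character $T[n]$, viewed as $a y$ with $a = T[n]$ and $y = \epsilon$. The convention that $\epsilon$ is always kept in $L$ with the appropriate suffix link means that all suffixes of $y$ (namely, only $\epsilon$ itself) are already in $\DDD'$, so the special insertion applies and takes $O\bigl(\TTT(1)\bigr)$ time. The suffix link $\slink(s_n)$ is then set to point to $\epsilon$. For the inductive step, suppose that $s_{i+1}, s_{i+2}, \ldots, s_n$ have been inserted and their suffix links are correctly defined. The next suffix to insert is $s_i = T[i] \cdot s_{i+1}$, which has the form $a y$ with $a = T[i]$ and $y = s_{i+1}$. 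Every suffix of $y = s_{i+1}$ is one of $s_{i+2}, \ldots, s_n$, all already present in $\DDD'$ by the inductive hypothesis, so the precondition of Theorem~\ref{the:general-suffix-data-structure}'s special insertion holds. Hence we can insert $s_i$ in $O\bigl(\TTT(n-i+1)\bigr)$ time and set $\slink(s_i)$ to point to $s_{i+1}$ in $O(1)$.

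Summing over all $n$ insertions, the total cost is
\[
\sum_{i=1}^{n} O\bigl(\TTT(n-i+1)\bigr) \;=\; O\!\left(\sum_{k=1}^{n} \TTT(k)\right),
\]
with an additive $O(n)$ term to scan the input string $T$ and to initialize the $n$ suffix pointers and suffix links. The space bound $\SSS(n)+O(n)$ is inherited directly from Theorem~\ref{the:general-suffix-data-structure}, since only references to suffixes of $T$ are stored, not copies. For retrieval in lexicographic order, we simply walk through the underlying $\dslcp$~list $L$ inside $\DDD'$: by construction $L$ always keeps its keys in sorted order and supports predecessor/successor navigation in $O(1)$ per step, giving linear-time output of all $n$ suffixes in lexicographic order.

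The only technical point worth double-checking is that the precondition \emph{``all suffixes of $y$ are stored in $\DDD'$''} is genuinely satisfied at every step; this is immediate from the reverse-length insertion order, since the suffixes of $s_{i+1}$ are precisely $\{s_j : j > i\}$, all of which have already been inserted. No other obstacle arises: the per-insertion cost, the maintenance of suffix links, and the ordered traversal are all provided by Theorems~\ref{the:dslcp-list} and~\ref{the:general-suffix-data-structure}, so the proof reduces to verifying this ordering invariant and performing the telescoping sum above.
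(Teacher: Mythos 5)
Your proposal is correct and matches the paper's intent: the paper's proof of this theorem is literally the one-line ``follows directly from Theorem~\ref{the:general-suffix-data-structure},'' and what you have written out --- inserting suffixes in increasing order of length so that the precondition ``all suffixes of $y$ are already stored'' holds at every step, invoking the $O\bigl(\TTT(\cdot)\bigr)$ suffix insertion, and summing to $O\bigl(n+\sum_{i=1}^n \TTT(i)\bigr)$ with lexicographic retrieval read off the sorted $\dslcp$~list --- is exactly the argument being elided. No gaps.
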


\begin{proof}
The proof follows directly from Theorem~\ref{the:general-suffix-data-structure}.
\end{proof}

\subsection{Balanced indexing structure}
\label{section:binary-indexing-structure}


Following Theorem~\ref{the:general-suffix-data-structure} we define the {\em balanced indexing structure}, shorthanded to BIS. The BIS handles strings and
the underlying structure is a balanced search tree, i.e. $\DDD$ is a standard balanced search tree (of your choice) and $\DDD'$ now is a binary search tree, where the elements
are suffixes of an input string. However, since we may continue inserting suffixes dynamically, the scenario is of a balanced search tree over an online
text $T$. We do point out that the data structure assumes that the text is received from right to left (however, this has no bearings on the online setting as we can always virtually flip the text and query patterns).
In fact, the correct way of viewing this scenario is that of an online indexing scenario, see next subsection.
The BIS has
proven to be instrumental in some other indexing data structures such
as the Suffix Trists~\cite{CKL06} and various heaps of strings~\cite{KLS07}.

\subsubsection{BIS as an Online Indexing Data Structure}

Given a pattern $P=p_1 p_2\cdots p_m$ one desires to find all occurrences of $P$ in $T$ using the BIS of $T$. Using Theorem~\ref{the:general-suffix-data-structure} one can use the balanced search tree to find
$P$ in $O(m+\TTT(n))$ time. In the case of (most) balanced search trees $\TTT(n)=\log n$. Moreover, using the balanced search tree one can find the predecessor and successor of $P$ in the same time. The two suffixes returned define the
interval in the $\dslcp$~list of all of the appearances of $P$. This sublist can be scanned in $O(tocc)$ time using the $\dslcp$~list. Hence,

\begin{theorem}
  \label{the:BIS}
  A BIS can be implemented in $O(n)$ memory cells, so that
  \textit{Addition} and \textit{Delete} operations take $O(\log n)$
  time each, and \textit{Query($P$)} takes $O(|P| + \log n)$ time plus $O(1
  + \mathit{tocc})$ time for reporting all the $\mathit{tocc}$
  occurrences of~$P$.
\end{theorem}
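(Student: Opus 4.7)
The plan is to instantiate Theorem~\ref{the:general-suffix-data-structure} with $\DDD$ being any standard balanced binary search tree (for example an AVL or red-black tree), for which $\TTT(n) = O(\log n)$ worst-case per operation and $\SSS(n) = O(n)$. The resulting $\DDD'$ stores all suffixes of the current text and, by Theorem~\ref{the:general-suffix-data-structure} together with Theorem~\ref{the:dslcp-list}, occupies $\SSS(n) + O(n) = O(n)$ memory cells, which settles the space bound. For an Addition operation, I would exploit the fact that the BIS receives the text from right to left: adding a new character $a$ in front of the current text $y$ corresponds to inserting the suffix $ay$ into $\DDD'$, and all suffixes of $y$ are already present in $\DDD'$ by induction on the number of additions. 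This is exactly the special insertion covered by the third bullet of Theorem~\ref{the:general-suffix-data-structure}, so it runs in $O(\TTT(n)) = O(\log n)$ time. A Delete (of the longest current suffix, say) amounts to removing one key from $\DDD$ in $O(\TTT(n)) = O(\log n)$ time and from the $\dslcp$~list $L$ in $O(1)$ time by Theorem~\ref{the:dslcp-list}, again $O(\log n)$ overall.

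For Query$(P)$, I would first run the standard search for $P$ in the balanced BST $\DDD'$, treating $P$ as a key \emph{not} stored in~$\DDD'$. By Theorem~\ref{the:general-suffix-data-structure} this search takes $O(\TTT(n) + |P|) = O(\log n + |P|)$ time, and it returns (pointers to) the predecessor $p_P$ and successor $s_P$ of $P$ in the sorted list of stored suffixes. The key observation is that in the lexicographically sorted list of suffixes maintained by $L$, every suffix having $P$ as a prefix forms a contiguous interval, whose leftmost element is precisely $s_P$ (if any occurrence exists at all). I would then walk this interval rightward in the doubly linked list underlying~$L$, one entry at a time. To test whether the current entry~$x$ still has $P$ as a prefix, I use the $\lcp$ between $s_P$ and $x$, which is available in $O(1)$ time via $\dslcp(s_P,x)$ by Theorem~\ref{the:dslcp-list}; the walk stops as soon as this $\lcp$ drops below~$|P|$. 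Each reported occurrence costs $O(1)$, so this phase contributes $O(1 + \mathit{tocc})$ and the total Query cost is $O(|P| + \log n) + O(1 + \mathit{tocc})$, as claimed.

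The main obstacle I anticipate is making the transition from the BST search (which only gives the endpoints $p_P$ and $s_P$) to a correct and cheap sequential reporting. The subtlety is that $P$ itself is not stored in $L$, so I cannot invoke $\dslcp$ with $P$ as one of the arguments; instead I must anchor the $\lcp$ comparisons at $s_P$ (which \emph{is} in $L$) and use the fact that the initial $\lcp(P, s_P)$ has already been computed as a by-product of the on-the-fly computation of Section~\ref{sub:computing-lcp-values} during the BST search. This is also how I verify in $O(1)$ that $\mathit{tocc}=0$ when $s_P$ itself does not start with $P$. With this bookkeeping in place, the three bounds of the statement follow by adding the individual contributions above.
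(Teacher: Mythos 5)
Your proposal is correct and follows essentially the same route as the paper: instantiate Theorem~\ref{the:general-suffix-data-structure} with a balanced search tree ($\TTT(n)=O(\log n)$, $\SSS(n)=O(n)$), use the right-to-left arrival of the text so that each Addition is the special $O(\TTT(n))$ suffix insertion, and answer Query($P$) by locating the predecessor/successor of $P$ in $O(|P|+\log n)$ time and scanning the contiguous interval of matching suffixes in the $\dslcp$~list in $O(1+\mathit{tocc})$ time. Your write-up is in fact more careful than the paper's about how the scan is terminated (anchoring the $\lcp$ tests at $s_P$ and reusing the on-the-fly $\lcp(P,s_P)$), but this is a refinement of the same argument, not a different one.
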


This gives an online indexing scheme with times equivalent to the static suffix array~\cite{MM93}.

\section{Online Construction of Suffix Trees}
\label{sec:online-construction-suffix-tree}

In this section we will show a major application of the results described above. Specifically, we will be using the BIS to achieve the results of this section.

Our goal will be to achieve online suffix trees with quick worst case update time. We do so in
$O(\log n)$ worst case time per insertion or deletion of a character
to or from the beginning of the text.

\subsection{Suffix tree data}
\label{sub:suffix-tree-data}

We will first describe the relevant information maintained within each
inner node in the suffix tree. Later we will show how to maintain this data over
insertions and deletions. For a static text, each node in the
suffix tree has a maximum outdegree of $|\Sigma|$, where $\Sigma$ is
the size of the alphabet of the text string $T$ (each outgoing edge
represents a character from $\Sigma$, and each two outgoing
edges represent different characters). An array of
size $|\Sigma|$ for each node can be maintained to represent the outgoing edges (a
non-existing edge can be represented by NIL), and then, a traversal of the suffix tree with a pattern spends constant time
at each node searching for the correct outgoing edge. However, the suffix tree would have size $O(n|\Sigma|)$ which is not
linear. Moreover, for an online construction, we cannot guarantee that the alphabet of the text will remain the same (in
fact, the alphabet of the text can change significantly with time). Therefore, we use a balanced search tree for each node in the
suffix tree (not to be confused with the BIS).  Each such balanced
search tree contains only nodes corresponding to characters of
outgoing edges. The space is now linear, but it costs
$O(\log|\Sigma|)$ time to locate the outgoing edge of a node. Nevertheless, in the
on-line scenario, we use this solution as balanced search
trees allow us to insert and delete edges dynamically (in $O(\log
|\Sigma |)$ time). Therefore, the cost of adding or
deleting an outgoing edge is $O(\log|\Sigma|)$ where $|\Sigma|$ is the
size of the alphabet of the string at hand. The time for locating an
outgoing edge is also $O(\log|\Sigma|)$. Note that we always have
$|\Sigma|\leq n$. Hence, if during the process of an addition or a
deletion we insert or remove a constant number of outgoing edges in
the suffix tree (as is the case), the cost of insertion and removal is
$O(\log|\Sigma|)$ which is within our $O(\log n)$
bound. In addition, for each node $u$ in the suffix tree we maintain the
length of the string corresponding to the path from the root to $u$. We denote this length by
$\mathit{length}(u)$.

We note that many of the operations on suffix trees (assuming linear space is desired) use various pointers to the text in order to save
space for labeling the edges. We will later show how to maintain such
pointers, called \emph{text links}, within our time and space
constraints.  We also note that a copy of the text saved in
array format may be necessary for various operations, requiring direct
addressing. As mentioned before, this can be done with constant time
update by standard de-amortization techniques.

\subsection{Finding the entry point}
\label{sub:finding-the-entry-point}

We now proceed to the online construction of the suffix tree. Assume
we have already constructed the suffix tree for string $T$ of size
$n$, and we are interested in updating the suffix tree so it will
be the suffix tree of string $aT$ where $a$ is some character.
It is a known fact that a depth first search (DFS) on the suffix tree
encounters the leaves, which correspond to suffixes, in lexicographic
order of the suffixes. Hence, the leaves of the suffix tree in the
order encountered by the DFS form the lexicographic ordering of
suffixes, which is in fact maintained by the BIS in the $\dslcp$~list
(see Section~\ref{section:binary-indexing-structure}). So, upon
inserting suffix $aT$ into the tree, $\mathit{node}(aT)$ needs to be inserted as a leaf. We know between which two leaves of the suffix tree
$\mathit{node}(aT)$ will be inserted according to the lexicographic ordering of the suffixes.

The insertion of the new suffix is implemented by either adding a new leaf as a child of an existent inner
node in the suffix tree, or by splitting an edge, adding a new node $u$ on the edge, and then the new leaf is a child of $u$.
We define the \emph{entry point} of $\mathit{node}(aT)$ as follows. If $\mathit{node}(aT)$ is inserted
as a child of an already existing node $u$, then $u$ is the entry point of $\mathit{node}(aT)$. If $\mathit{node}(aT)$
is inserted as a child of a new node that is inserted while splitting an edge $e$, then $e$ is the entry point of $\mathit{node}(aT)$.

We assume without loss of generality that $\mathit{node}(aT)$ is inserted between $\mathit{node}(T')$ and
$\mathit{node}(T'')$, where $aT$ is lexicographically bigger than $T'$ (hence
$\mathit{node}(aT)$ appears directly after $\mathit{node}(T')$ in the $\dslcp$~list), and $aT$ is
lexicographically smaller than $T''$ (hence $\mathit{node}(T'')$ appears directly before
$\mathit{node}(aT)$ in the $\dslcp$~list). We denote by $x$ the lowest common ancestor of $\mathit{node}(T')$ and $\mathit{node}(T'')$ in the suffix tree of $T$.
Consider the two paths $P_{T'}$ and $P_{T''}$ from $x$ to $\mathit{node}(T')$ and $\mathit{node}(T'')$ respectively. Clearly these two
paths, aside from $x$, are disjoint. We now prove the following lemma that will later assist us in finding the entry point for $\mathit{node}(aT)$.


\begin{lemma}\label{lem:entry_point}
The entry point of $\mathit{node}(aT)$ is on $P_{T'} \cup P_{T''}$. Furthermore, we can decide in constant time which of the following is correct:

\begin{enumerate}
\item{} $x$ is the entry point.
\item{} The entry point is in $P_{T'}-\{x\}$.
\item{} The entry point is in $P_{T''}-\{x\}$.

\end{enumerate}
\end{lemma}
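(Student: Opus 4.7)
The plan is to exploit three longest common prefix values, all computable in $O(1)$ time: the depth $d = \lcp(T',T'')$ of $x$ (available from the $\dslcp$~list), and the two prefixes $\ell' = \lcp(aT, T')$ and $\ell'' = \lcp(aT, T'')$ of the new suffix with its neighbors (available via the suffix link trick of Theorem~\ref{the:general-suffix-data-structure}, since $\slink(aT)$ points to $T$ and all suffixes of $T$ are already in $\DDD'$). The node $x$ represents the string $T'[1\pp d]$. Because $aT$ lies lexicographically strictly between $T'$ and $T''$, we have $\ell' \geq d$ and $\ell'' \geq d$; and because $T'[d+1] \neq T''[d+1]$, the character $aT[d+1]$ can match at most one of $T'[d+1]$, $T''[d+1]$, so $\min(\ell',\ell'')=d$.

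I would then split on which of $\ell' > d$ and $\ell'' > d$ holds. If neither, then $\ell' = \ell'' = d$ and $aT[d+1]$ differs from both $T'[d+1]$ and $T''[d+1]$; inserting $\mathit{node}(aT)$ creates a brand-new outgoing edge at $x$, so $\mathit{node}(aT)$ becomes a new child of $x$ and $x$ itself is the entry point. If $\ell' > d$ (forcing $\ell'' = d$), then $aT$ shares more than $d$ characters with $T'$, placing it in the subtree of $x$ rooted at the child of $x$ that lies on $P_{T'}$; since the successor $T''$ of $aT$ belongs to the sibling subtree rooted at the child of $x$ on $P_{T''}$, the leaf $T'$ must be the rightmost leaf of its subtree, and $aT$ must slot in as the new rightmost leaf there, which pins the split to the path $P_{T'}$ at depth exactly $\ell'$, i.e.\ in $P_{T'}-\{x\}$. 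The case $\ell''>d$ is symmetric and yields a split in $P_{T''}-\{x\}$.

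The constant-time decision procedure is then immediate: compute $d$, $\ell'$, $\ell''$ in $O(1)$ and inspect which, if any, is strictly greater than $d$. The main obstacle I see is justifying that in the second case the split stays on $P_{T'}$ rather than branching off elsewhere in the $T'$-subtree of $x$. This relies critically on the adjacency of $T'$ and $T''$ in the $\dslcp$~list (equivalently, in the DFS leaf order of the current suffix tree), which forces $T'$ to be an extremal descendant of its subtree and so forces $aT$'s entry point to lie on the spine of that subtree from $x$ down to $T'$; any deviation would require a leaf lexicographically between $T'$ and $T''$ already present in the suffix tree, contradicting their being consecutive in the $\dslcp$~list.
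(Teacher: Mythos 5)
Your proposal is correct and follows essentially the same route as the paper: it reduces the decision to comparing $\lcp(aT,T')$ and $\lcp(aT,T'')$ (your test ``which of $\ell',\ell''$ exceeds $d$'' is equivalent, since you show $\min(\ell',\ell'')=d$), it rules out entry points off $P_{T'}\cup P_{T''}$ by the same contradiction with $T'$ and $T''$ being lexicographically adjacent leaves, and it obtains the needed $\lcp$ values in $O(1)$ via the same suffix-link computation $\lcp(aT,T')=\lcp(T,\slink(T'))+1$.
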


\begin{proof} It follows from $\lcp$ properties that $\lcp(aT,T')\geq \lcp(T',T'')$, and that $\lcp(aT,T'')\geq \lcp(T',T'')$. Thus, from the connection between the lowest common ancestor of two leaves in the suffix tree, and the longest common prefix of the suffixes corresponding to those two leaves, we conclude that the entry point must be in the subtree of $x$ in the suffix tree, as $label(x)$ is a prefix of $aT$. If the entry point is not in $P_{T'} \cup P_{T''}$ then the entry node is on some path $P$ branching out of $P_{T'} \cup P_{T''}$ ending in some leaf $\ell$. However, this would imply that the suffix corresponding to $\ell$ is lexicographically between $T'$ and $T''$, contradicting the fact that they are neighboring suffixes in the suffix tree prior to the insertion.

Hence:
\begin{enumerate}
\item{} If $\lcp(aT,T') = \lcp(aT,T'')$ then $x$ is the entry point.
\item{} If $\lcp(aT,T') > \lcp(aT,T'')$ then the entry point is in $P_{T'}-\{x\}$.
\item{} If $\lcp(aT,T') < \lcp(aT,T'')$ then the entry point is in $P_{T''}-\{x\}$.

\end{enumerate}

The comparison can be performed in constant time using the $\dslcp$~list. Note that even if $aT$ is not yet in the $\dslcp$~list we can compute $\lcp(aT,T')$, by comparing the first character of $T'$ with $a$. If they are different then the $\lcp$ value is 0. Otherwise, we compare the $\lcp$ of $T$, which is accessible as the last suffix inserted, and $\slink(T')$ (both are in the list). Then $\lcp(aT,T') = \lcp(T,\slink(T'))+1$.
\end{proof}

If $x$ is the entry point, then we can easily insert $\mathit{node}(aT)$ as a new child, with the edge labeled starting with the symbol
corresponding to the $k$'th character of $aT$ where $k=\lcp(T',T'')+1$. The insertion requires $O(\log |\Sigma|)$ time.

The other two cases  ($\lcp(aT,T') > \lcp(aT,T'')$ and $\lcp(aT,T') < \lcp(aT,T'')$) are symmetric. Hence, without loss of generality we
assume that $\lcp(aT,T') > \lcp(aT,T'')$. In this case, $\mathit{node}(aT)$ and $\mathit{node}(T')$ share a common path from the root of
the suffix tree until the entry point, and this path corresponds to $\lcp(aT,T')$, as it is the length of the labels on the joint path.
So, our goal is to find a node $v\in P_{T'}$ such that $|label(v)|\leq \lcp(aT,T')$ of maximal depth in the suffix tree. This is discussed next.

Note that once we find $v$ there is not much work left to be done. Specifically, the only nodes or edges which might change are $v$ and
its outgoing edges. This is because $\mathit{node}(aT)$ will enter either as a child of $v$ (in which case $v$ is the entry point of the
new suffix), or one of $v$'s outgoing edges will have to break into two as described in the previous subsection (in which case that edge is the
entry point). We can easily distinguish between the two options by noting that if $k=\lcp(T',aT)$ (we can
calculate this in constant time through the $\dslcp$~list, as in the end of the proof of Lemma~\ref{lem:entry_point}) then $v$ will be the parent of $\mathit{node}(aT)$, and if
there is an inequality ,then we must break an edge for the parent of
$\mathit{node}(aT)$. Each of these cases will take at most
$O(\log |\Sigma|)$ time. So we are left with the task of finding $v$.

The following is derived directly from Lemma~\ref{minlcp}.

\begin{corollary}\label{cor:binary-search}
Let $T$ be a string of $n$ symbols. Let $T_i, T_k$ and $T_j$ be three lexicographically ordered suffixes of $T$, i.e. $T_i <_L T_k <_L T_j$, where $<_L$ is the lexicographic comparison. Then,
$$\lcp(T_i,T_k)\geq \lcp(T_i,T_j)$$

\end{corollary}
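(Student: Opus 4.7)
The plan is to apply Lemma~\ref{minlcp} directly. Consider the full lexicographic order of all suffixes of $T$, and let $p < q < r$ be the positions of $T_i$, $T_k$, $T_j$ respectively in that sorted list (this is well-defined because the three suffixes are distinct and ordered $T_i <_L T_k <_L T_j$).

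By Lemma~\ref{minlcp}, applied first to the pair $(T_i, T_j)$, we have
\[
\lcp(T_i, T_j) = \min\{\lcp(x_s, x_{s+1}) \mid p \leq s < r\},
\]
and applied to $(T_i, T_k)$,
\[
\lcp(T_i, T_k) = \min\{\lcp(x_s, x_{s+1}) \mid p \leq s < q\}.
\]
The index set $\{s : p \leq s < q\}$ is a subset of $\{s : p \leq s < r\}$ since $q < r$, and the minimum taken over a subset cannot be smaller than the minimum over the superset. Therefore $\lcp(T_i, T_k) \geq \lcp(T_i, T_j)$, which is the claimed inequality.

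There is essentially no obstacle here: the corollary is a one-line consequence of Lemma~\ref{minlcp} together with the elementary fact that restricting the index set of a minimum can only increase its value. The only thing worth double-checking is that the statement indeed makes sense when $T_i$, $T_k$, $T_j$ are viewed as a subset of the full sorted list of suffixes; since they are pairwise distinct and already given in lexicographic order, assigning them positions $p < q < r$ is immediate.
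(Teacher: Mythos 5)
Your proof is correct and is exactly the derivation the paper intends: the paper states the corollary "is derived directly from Lemma~\ref{minlcp}," and your argument — expressing both $\lcp$ values as minima of adjacent-pair $\lcp$s over nested index ranges — is that direct derivation spelled out.
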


Using Corollary~\ref{cor:binary-search} one can now use the BIS in order to locate $v$ as follows. Begin with the node $w$
corresponding to the lexicographically smallest suffix in the BIS. From the properties of balanced search trees, this node is a leaf.
Consider the list of suffixes on which the BIS is constructed (in the off-line sense). If this list is traversed from $w$ towards $node(T')$, and
for each node $z$ with corresponding suffix $S$ $\lcp(S,T')$ is computed, then the values will increase until the last node $z'$ for
which $\lcp(S,T')\leq k$ is reached. However, such a traversal can take linear time.

Instead, one may use the BIS in order to find $z'$ in $O(\log n)$ time. It should be noted that the reason node $z'$ is the node
being searched for follows directly from the properties and ordering of binary search trees (basically the lexicographical ordering in the BIS can be
substituted with the ordering defined by Corollary~\ref{cor:binary-search}).

Traverse upwards from $w$ in the BIS until a node $z$ is reached
where $z$ is the last node which is an ancestor of $w$ in the BIS whose corresponding suffix is $S$, such that $\lcp(S,T')\leq k$. This
means that either $z$ is the root of the BIS, or the parent of $z$ in the BIS has suffix $S'$ such that $\lcp(S',T') > k$.

Consider the relationship between $z$ and $z'$. $z'$ cannot be in the subtree of the left child of $z$ in the BIS, as
$\lcp(\textit{suffix}(z),T')\leq \lcp(\textit{suffix}(z'),T')$ and  Corollary~\ref{cor:binary-search}. Also, $z'$ has to be in the
subtree rooted by $z$ in the BIS as $\lcp(\textit{suffix}(parent(z)),T')> \lcp(S',T')>k\geq \lcp(\textit{suffix}(z'),T')$ and, hence,
this follows from Corollary~\ref{cor:binary-search}. Thus $z'$ is either $z$, or in the subtree of the right child of $z$ in the BIS.

Begin a recursive traversal down the BIS starting from $z$ where at each node $\hat{z}$ with corresponding suffix $\hat{S}$ do the following.
Compute $\hat{k}=\lcp(\hat{S},T')$. If $\hat{k}>k$ then $z'$ has been passed in the list of suffixes, and the traversal moves into the left
subtree of $\hat{z}$. If $k'=k$ then $z'$ has been found (it does not matter if there are other nodes for which the $\lcp$ is also exactly $k$
as this is sufficient in order to find the entry point). If $\hat{k}<k$ then mark $z'$ as the current candidate, and continue to traverse down
the right subtree of $z'$. If a leaf is reached, then $z'$ is the last candidate that has been marked.

This gives us the following:

\begin{lemma}
It is possible to find the entry point in $O(\log n)$ time.
\end{lemma}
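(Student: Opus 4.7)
The plan is to formalize the two-phase traversal already sketched (upward then downward in the BIS), account for the cost of each step via Theorem~\ref{the:dslcp-list}, and invoke the balanced-tree depth to obtain the $O(\log n)$ bound. Let $k = \lcp(aT, T')$, which by the remark at the end of the proof of Lemma~\ref{lem:entry_point} is computable in $O(1)$ time using the $\dslcp$~list and the suffix link $\slink(T')$.

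First I would bound the upward phase. Starting from the leaf $w$ of the BIS that represents $T'$, I walk toward the root, and at each ancestor~$\hat z$ with associated suffix $\hat S$, I evaluate $\lcp(\hat S, T')$ in $O(1)$ time via the $\dslcp$~list. I stop at the first ancestor $z$ whose parent's suffix $S'$ satisfies $\lcp(S', T') > k$ (or at the root, if no such parent exists). Since the BIS is a balanced search tree on~$n$ suffixes, the length of the walk is $O(\log n)$, and each step costs $O(1)$, so the whole phase takes $O(\log n)$ time.

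Next I would bound the downward phase. By the argument preceding the lemma, invoking Corollary~\ref{cor:binary-search}, the target node $z'$ lies either at $z$ itself or inside the right subtree of $z$ in the BIS. I then run the described recursive descent: at each visited node $\hat z$ with suffix $\hat S$, compute $\hat k = \lcp(\hat S, T')$ in $O(1)$ time and branch left, stop, or branch right (updating the current best candidate) according to whether $\hat k > k$, $\hat k = k$, or $\hat k < k$. Correctness of the branching choice is precisely what Corollary~\ref{cor:binary-search} guarantees, and the descent terminates at a leaf after $O(\log n)$ steps because the BIS is balanced; each step again costs $O(1)$.

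Putting the two phases together, $z'$ is identified in $O(\log n)$ time. Finally, from $z'$ I recover the entry point $v$ on the path $P_{T'}$ in $O(1)$ additional time by consulting $\mathit{length}(v)$ and the one edge that may need to be split, as described in the discussion before the lemma. The main potential obstacle is ensuring that every lcp evaluation during the traversal is truly $O(1)$, but since all suffixes under consideration are already stored in the $\dslcp$~list (the only freshly arrived string $aT$ is handled specially through $\slink$ as in Lemma~\ref{lem:entry_point}), Theorem~\ref{the:dslcp-list} supplies each $\lcp$ in constant time, and the claimed $O(\log n)$ bound follows.
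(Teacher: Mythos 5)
Your accounting of the costs (constant time per visited node via Theorem~\ref{the:dslcp-list}, $O(\log n)$ nodes because the BIS is balanced, constant extra time to go from $z'$ to the entry point) matches the paper, as does your downward phase. The genuine problem is the upward phase: you start the walk at the leaf of the BIS holding $T'$, whereas the algorithm --- and the correctness argument you then invoke --- starts it at the leaf $w$ holding the \emph{lexicographically smallest} suffix. This is not cosmetic. The claim that $z'$ is either $z$ itself or in the right subtree of $z$ is established in the text preceding the lemma using two facts that hold only because $z$ is an ancestor of the minimum: (i) $z$ lies on the leftmost path of the BIS, so the suffixes met on the walk increase lexicographically toward the root and Corollary~\ref{cor:binary-search} makes their $\lcp$ with $T'$ behave monotonically, which is what gives the stopping rule ``parent's $\lcp$ exceeds $k$'' its meaning; and (ii) the subtree of any node on that path is an initial segment of the sorted suffix list containing the minimum, so once the parent's $\lcp$ exceeds $k$, the sought node $z'$ (the last suffix before $T'$ with $\lcp \leq k$) is trapped inside $z$'s subtree and, again by the corollary, cannot sit in $z$'s left subtree.

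With your starting point neither fact holds. The ancestors of $\mathit{node}(T')$ zigzag lexicographically around $T'$; in the typical case the parent of $\mathit{node}(T')$ is its immediate neighbour in the sorted list and already has $\lcp$ with $T'$ exceeding $k$, so your rule stops at $z=\mathit{node}(T')$ itself --- a leaf whose empty right subtree certainly does not contain $z'$, which lies strictly to the \emph{left} of $T'$ in the list. The descent phase would then search in the wrong region. (A finger search leftward from $T'$ could be made to work, but it needs a different stopping rule and a different localization argument than the one you borrow.) Replacing your starting leaf by the leftmost leaf of the BIS restores the paper's argument, and the rest of your write-up then goes through.
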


\begin{proof} Once $z'$ is found, the $lca$ of $z'$ and $\mathit{node}(T')$ can be located in constant time,
which as explained above suffices for finding the entry point. The traversal on the BIS that was used in order to find $z'$ takes $O(\log n)$ time
as a simple traversal is used up and down the BIS, spending constant time at each node traversed.
\end{proof}

\subsection{Text Links}
As noted, many applications of the suffix tree use various pointers to the text in order to save
space for labeling the edges. To implement these applications one utilizes the fact that each edge label is a substring of the text. Specifically one can maintain two pointers per edge, one to the location of the first character in the substring in the text, and one pointer to the location of the last character of the substring. Note that if the text contains more than one appearance of this substring, one may pick an arbitrary appearance. Such pointers are called text links, and can still be maintained in the on-line scenario as follows.

When a new leaf $u$ is inserted as a new child of a node $v$ together with the new edge $e=(v,u)$, two text links are created to denote the substring in the text corresponding to $e$. To do this we note that at this time the suffix corresponding to $u$ is the text itself, and hence $label(u,v)$ is simply the last $n-|label(v)|+1$ characters in the text. Thus, the first text link is to location $|label(v)|$ in the text, and the second text link is to the last location of the text.

When breaking an edge $e=(v,u)$ into two by adding a new node $w$, creating edges $e_1=(v,w)$ and $e_2=(w,u)$, we note that the first text link of $e_1$ is the same as the first text link of $e$, and the second text link of $e_2$ is the same as the second text link of $e$. The second text link of $e_1$ is to the location which is $|label(w)|-|label(v)|-1$ away from the first text link, and the first text link of $e_2$ is to the location $|label(w)|-|label(v)|$.

\subsection{Deletions}
\label{sub:deletions}

Assume we built the suffix tree for the string $aT$ where $a$
is a character and $T$ is a text of size $n$. We now wish to support
deletion the first character $a$, hence removing the suffix $aT$
from the suffix tree. This is done by removing $\mathit{node}(aT)$ and
possibly its parent from the suffix tree, and also removing
$\mathit{node}(aT)$ from the BIS. Clearly, this can all be done in $O(\log n)$ time.

Note that we chose each text link to point to the substring in the text which created the edge originally. Therefore,
even if we added many nodes that broke edges within the original edge and then deleted them, we will still always have the appropriate substring in the correct location.
This is because of the stack-like behavior of adding and removing characters to or from the beginning of the text.

Finally we conclude the following:
\begin{theorem}
It is possible to construct a suffix tree in the online text
scenario where the cost of an addition of a character or a deletion
of a character in $O(\log n)$ worst case time, where $n$ is the size
of the text seen so far. Furthermore, at any point in time, an
indexing query can be answered in time $O(m+\log |\Sigma|+\mathit{tocc})$
where $m$ is the size of the pattern, $\Sigma$ is the alphabet
consisting only of characters seen in the text, and $\mathit{tocc}$ is the
number of occurrences of the pattern in the text.

\end{theorem}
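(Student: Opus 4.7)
The plan is to assemble the machinery developed in the previous subsections and show that every step of an update or query fits within the claimed bounds. First I would handle the insertion of a new leading character $a$, transforming the suffix tree of $T$ into the suffix tree of $aT$. The new suffix $aT$ is inserted into the BIS on the suffixes of the current text: by Theorem~\ref{the:general-suffix-data-structure} applied with $\DDD$ a balanced search tree, this costs $O(\TTT(n))=O(\log n)$, because all suffixes of $T$ are already stored and the suffix-link $\slink(aT)$ can be initialized from the entry for $T$ (which is the suffix inserted at the previous step). Simultaneously, the $\dslcp$~list is updated in $O(1)$ time by Theorem~\ref{the:dslcp-list}.

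Once $aT$ has been placed between its lexicographic neighbors $T'$ and $T''$ in the BIS, I would invoke the preceding lemma to locate the entry point in the suffix tree in $O(\log n)$ worst case time, using a binary-search traversal of the BIS driven by constant-time $\lcp$ comparisons. Having the entry point in hand, the actual surgery on the suffix tree is local: either a new leaf $\mathit{node}(aT)$ is attached as a child of an existing node, or an edge is split by inserting a new internal node and then attaching $\mathit{node}(aT)$ as its child. In both cases a constant number of outgoing edges are created or rerouted, each costing $O(\log|\Sigma|)$ in the per-node balanced search tree, and the text links are updated as described in Section~\ref{sub:finding-the-entry-point} using the fact that the new suffix equals the whole current text. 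Since $|\Sigma|\le n$, the total cost is $O(\log n)$ in the worst case.

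For the deletion of the first character $a$, I would delete $\mathit{node}(aT)$ from the BIS (Theorem~\ref{the:BIS}) and remove it from the suffix tree; if its parent is an internal node that drops to one child, that parent is contracted as well. Because text links point to the substring that originally generated an edge, and additions and deletions behave in a stack-like fashion on the left end of the text, the remaining links are still valid. Each of these operations is $O(\log n)$.

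For the indexing query, I would present two interchangeable search strategies, taking the better of the two: walking the pattern down the suffix tree by navigating the per-node balanced search trees over outgoing edges in $O(m\log|\Sigma|)$ time, or using the BIS via Theorem~\ref{the:BIS} to locate the lexicographic interval of suffixes prefixed by $P$ in $O(m+\log n)$ time and then reporting the $tocc$ occurrences by scanning the $\dslcp$~list between the returned predecessor and successor in $O(1+tocc)$ additional time. Combining these with the update analysis yields the claimed bounds. The main obstacle I expect is not in any single step but in verifying that the invariants are preserved in the right order across the three coupled structures (BIS, $\dslcp$~list, and suffix tree with text links); in particular, care is needed so that $\slink(aT)$ is available before it is consulted during the entry-point search, which is why the BIS insertion is carried out first and exploits the fact that the previously inserted suffix is exactly $T$.
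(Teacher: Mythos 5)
Your proposal is correct and follows essentially the same route as the paper: insert $aT$ into the BIS via the suffix technique of Theorem~\ref{the:general-suffix-data-structure} (so that $\slink(aT)$ and the neighbors $T'$, $T''$ are available first), locate the entry point with the $O(\log n)$ binary search over the BIS, perform the $O(\log|\Sigma|)$ local surgery and text-link updates, and handle deletions and queries exactly as in the paper's text-link/deletion subsections and Theorem~\ref{the:BIS}. The only discrepancy is with the theorem statement itself, whose query bound $O(m+\log|\Sigma|+\mathit{tocc})$ appears to be a misprint for the $O(\min(m\log|\Sigma|,\, m+\log n)+\mathit{tocc})$ bound that your two-strategy argument (and the paper's abstract) actually delivers.
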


\bibliographystyle{plain}
\bibliography{string}

\end{document}